\documentclass[review]{elsarticle}
\usepackage[left=100pt, right=100pt]{geometry}
\usepackage{lineno,hyperref}
\modulolinenumbers[5]
\usepackage[utf8]{inputenc}
\usepackage{graphicx}
\usepackage{mwe}
\usepackage{subfig}
\usepackage {float}
\usepackage{amssymb}
\usepackage{amsmath}
\usepackage{dsfont}
\usepackage[english]{babel}
\usepackage{amssymb}
\usepackage{mathrsfs}
\usepackage{frcursive}
\usepackage{mathtools}
\usepackage{amsthm}
\usepackage[utf8]{inputenc}
\usepackage[T1]{fontenc}
\usepackage{lmodern}
\usepackage{ae}
\usepackage{enumerate}
\newtheorem{theorem}{Theorem}[section]
\newtheorem{lemma}{Lemma}

\newtheorem{definition}{Definition}
\newtheorem{example}{Example}
\journal{Journal of \LaTeX\ Templates}


\newcommand{\Rb}{\mathbb{R}}










\begin{document}

\begin{frontmatter}
	
	\title{\textbf{Applied Probability Insights into Nonlinear Epidemic Dynamics with Independent Jumps}}
	\author[label1]{ Brahim Boukanjime\corref{}}

\ead{brahim.boukanjime@gmail.com}
\affiliation[label1]{organization={Laboratory of Mathematical Modeling and Economic Calculations, Hassan First University of Settat},
  country={Morocco}}

\author[label2]{ Mohamed El Fatini\corref{}}

\ead{melfatini@gmail.com}
\affiliation[label2]{organization={Ibn Tofail University, Faculty of Sciences, Department of Mathematics, BP 133, Kénitra},
  country={Morocco}}

\author[label3]{Mohamed Maama\corref{cor}}
\cortext[cor]{Corresponding author}
\affiliation[label3]{organization={Applied Mathematics and Computational Science Program,  Computer, Electrical and Mathematical Sciences and Engineering Division, King Abdullah University of Science and Technology},
	country={Kingdom of Saudi Arabia}}
 \ead{maama.mohamed@gmail.com}

\begin{abstract}
This paper focuses on the analysis of a stochastic SAIRS-type epidemic model that explicitly incorporates the roles of asymptomatic and symptomatic infectious individuals in disease transmission dynamics. Asymptomatic carriers, often undetected due to the lack of symptoms, play a crucial role in the spread of many communicable diseases, including COVID-19. Our model also accounts for vaccination and considers the stochastic effects of environmental and population-level randomness using Lévy processes. We begin by demonstrating the existence and uniqueness of a global positive solution to the proposed stochastic system, ensuring the model's mathematical validity. Subsequently, we derive sufficient conditions under which the disease either becomes extinct or persists over time, depending on the parameters and initial conditions. The analysis highlights the influence of random perturbations, asymptomatic transmission, and vaccination strategies on disease dynamics. Finally, we conduct comprehensive numerical simulations to validate the theoretical findings and illustrate the behavior of the model under various scenarios of randomness and parameter settings. These results provide valuable insights into the stochastic dynamics of epidemic outbreaks and inform strategies for disease management and control.

\end{abstract}
\begin{keyword}
SAIRS epidemic model, Lévy processes, Asymptomatic transmission, Stochastic extinction, Persistence in mean, Vaccination dynamics.
\end{keyword}
	
\end{frontmatter}

\section{Introduction}

Infectious diseases  spread geographically on a large scale in a short period of time, affecting the lives of people around the world, such as the current COVID-19 pneumonia\cite{1,2,3,4}. For example, according to the World Health Organization (WHO), globally, as of 3:05 p.m. CET, December 20, 2020, there have been 75,098,369 confirmed cases of COVID-19, including 1,680,339 deaths, reported to WHO \cite{who}. Currently, we are struggling to control the spread of this epidemic, which has the potential to be the deadliest disease in human history. In this regard, to face and manage such a crisis situation, various scientific methods and analytical studies are required. As a method and science, mathematical modeling is the appropriate tool. It is the art of translating real problems into flexible mathematical formulas whose theoretical and numerical analysis provides suggestions,
answers, and constructive solutions. As a field of application, mathematical epidemiology plays the main role in the analysis of factors that may influence the prevalence of the disease.
\\
A peculiar, yet crucial feature of the recent Covid-19 pandemic is the existence of an asymptomatic state, where an infected individual shows no symptoms but can potentially trasnmit the infection to other individuals. This is one of the main aspect that has allowed the virus to circulate widely in the population, since asymptomatic cases often remain unidentified, and presumably have more contacts than symptomatic cases, since lack of symptoms often implies a lack of quarantine. Hence, the contribution of the so called “silent spreaders” to the infection transmission dynamics are relevant for various communicable diseases, such as Covid-19, influenza, cholera and shigella \cite{12,13,14,15}.
\\ 
Models that incorporate an asymptomatic compartment already exist in literature, but have not been analytically studied as thoroughly as more famous compartmental models. For example, in \cite{rob}, the authors studied a mathematical model which describes the dynamics of an aerially transmitted disease, under explicit consideration of asymptomatic cases. In the same context, Ansumali et al. \cite{anso} discussed the dynamics of SARS-CoV-2 pandemic with asymptomatic patients. In another work, Ottaviano et al. \cite{ottaw}, has extended the result of \cite{rob}, in the case of vaccination and found a threshold condition between persistence and extinction. Actually, the paper \cite{global} is considered to be an extension of the study \cite{ottaw}, which presented a model that includes varying population, vaccination of newborns, disease induced deaths and treatment to symptomatic individuals. 
\\
As far as we know, natural phenomenon is always affected by environmental factors that can aggravate or mitigate the spread of the epidemic. The stochastic quantification of several real-life phenomena has been immensely helpful in understanding the random nature of their incidence or occurrence\cite{5,6,7,8,10}. It has also helped to find solutions to those problems that arise from it, either in the form of minimizing their undesirability or maximizing their rewards. In such cases, deterministic systems, while able to make very informative forecasts and previsions, are not appropriate enough \cite{doo}. So, there is a pressing need for a developed mathematical model that can take into account the randomness effect.
 Since the inclusion of noises of Lévy provides more mutual information or good results for various infectious diseases in the form of stochastic models. Then, it will be more reasonable to illustrate those sudden fluctuations into the infection model. For this reason, we will make recourse to the well known jump Lévy processes. By taking into account this type of random perturbations, we extend the  deterministic SAIRS (Susceptible-Asymptomatic infected-symptomatic Infected-Recovered-Susceptible) model proposed in \cite{global}  to the following modified system  
\begin{eqnarray}\label{sys}
	\left\{ \begin{array}{ll}
		dS(t)=\bigg[b(1-\nu)-\dfrac{\beta_ASA}{N}-\dfrac{\beta_ISI}{N}-(\rho+\mu)S+\xi R\bigg]dt+S(t^-)d\mathcal{Z}_1(t),\\
		dA(t)=\bigg[\dfrac{\beta_ASA}{N}+\dfrac{\beta_ISI}{N}-(\gamma_A+\sigma+\mu)A\bigg]dt+A(t^-)d\mathcal{Z}_2(t),\\
		dI(t)=[\sigma A-(\gamma_I+\eta+\alpha+\mu)I]dt+I(t^-)d\mathcal{Z}_3(t),\\
		dR(t)=[b\nu+\rho S+\gamma_AA+(\gamma_I+\eta)I-(\xi+\mu)R]dt+R(t^-)d\mathcal{Z}_4(t)
	\end{array} \right.,
\end{eqnarray}
where the total population $N$ is partitioned into four compartments, namely $S$, $A$, $I$, and $R$, which represent the fraction of susceptible, asymptomatic infected, symptomatic infected and recovered individuals, respectively, such that $N = S + A + I + R$. The infection can be transmitted to a susceptible through a contact with either an asymptomatic infectious individual, at rate
$\beta_A$, or a symptomatic individual, at rate  $\beta_I$. Once infected, all susceptible individuals enter an asymptomatic state, indicating a delay between infection and symptom onset if they occur. Indeed, we include in the asymptomatic class both individuals who will never develop the symptoms and pre-symptomatic who will eventually become symptomatic. From the asymptomatic compartment, an individual can either progress to the class of symptomatic infectious $I$, at rate $\sigma$, or recover without ever developing symptoms, at rate $\gamma_A$. An infected individuals with symptoms can recover at a rate $\gamma_I$ or are given treatment at rate $\eta$. Disease-related deaths are at rate $\alpha$. We assume that the recovered individuals do not obtain a long-life immunity and can return to the susceptible state after an average time $1/\xi$. We
also assume that a proportion $\nu$ of susceptible individuals receive a dose of vaccine which grants them a temporary immunity. We do not add a compartment for the vaccinated individuals, not distinguishing the vaccine-induced immunity from the natural one acquired after recovery from the virus. We consider the vital dynamics of the entire population and, we assume that the rate of births is $b$ and natural death rate is $\mu$. Further, they assume a proportion of newborns, $\nu$ are vaccinated who enter the recovered class while the remaining enter the susceptible class. Moreover, susceptibles are vaccinated at rate $\rho$. Here and elsewhere,
$S (t^-)$, $A(t^-)$, $I(t^-)$ and $R(t^-)$ are respectively the
left limits of the functions $S(t)$, $A(t)$, $I(t)$ and $R(t)$. The vector $\mathcal{Z}(t) = (\mathcal{Z}_1(t), \mathcal{Z}_2(t), \mathcal{Z}_3(t),\mathcal{Z}_4(t))$ indicates the $4$-dimensional Lévy process with its associated Lévy-Khintchine formula: 
\begin{eqnarray*}
\mathbb{E}\bigg\{e^{ik_1\mathcal{Z}_1(t)+ik_2\mathcal{Z}_2(t)+ik_3 \mathcal{Z}_3(t)+ik_4\mathcal{Z}_4(t)}\bigg\}=\exp\bigg\{-\frac{t}{2}<k,Qk > + t\displaystyle\int_{\mathbb{R}\backslash 0}(e^{i<k,q(z)>}-i<k,q(z)>-1)\nu(dz) \bigg\},
\end{eqnarray*}
where $\mathbb{E}$ is the mathematical expectation,  $k = (k_1, k_2, k_3,k_4)\in \mathbb{R}^4$, $t\in\mathbb{R}_+,$ and $Q$ is represents a represents a specific positive definite matrix, $B_i(t)~~(i = 1, 2, 3,4)$ are mutually independent standard Brownian motions with $B_i(0) = 0$ defined on a complete probability space $(\Omega,\mathcal{F}, P)$
	with a filtration $\{\mathcal{F}_t\}_{t\geq 0}$ which satisfies the usual conditions. $\sigma_i~~(i = 1, 2, 3,4)$ represent the intensities of the Gaussian white noise. The compensated Poisson random measure is represented by $\tilde{N}(dt, dz) := \mathcal{N}(dt, dz) - \nu(dz)dt$. $\mathcal{N}(dt, dz)$ is the Poisson random measure with characteristic measure $\nu(dz)$ on a measurable subset $\mathbb{Z}$ of $(0, \infty)$ which is a finite Lévy measure, such that $\displaystyle \int_{\mathbb{Z}}\min\{1,\lvert q_i(z)\rvert^2\}\nu(dz)<\infty~~(i = 1, 2, 3,4)$. The jumps amplitude $q_i: \mathbb{Z}\rightarrow\mathbb{R}~~(i = 1, 2, 3,4)$
are bounded and continuously differentiable.\\
In accordance with the theory exhibited in \cite{levy} and \cite{ley}, the process $\mathcal{Z}$ takes the following form:
\begin{eqnarray*}
\mathcal{Z}_i(t)=\sigma_i B_i(t)+\displaystyle\int_0^t \int_{\mathbb{Z}}q_i(z)\,\mathrm{{\tilde{N}}}(ds,dz)~~i=1,2,3,4.
\end{eqnarray*}
Based on the above assumptions, the model \eqref{sys} can be described by
\begin{eqnarray}\label{sys1}
	\left\{ \begin{array}{ll}
	dS(t)=\bigg[b(1-\nu)-\dfrac{\beta_ASA}{N}-\dfrac{\beta_ISI}{N}-(\rho+\mu)S+\xi R\bigg]dt+\sigma_1 SdB_1+\displaystyle \int_{\mathbb{Z}}q_1(z)S(s-)\,\mathrm{{\tilde{N}}}(ds,dz),\\\
	dA(t)=\bigg[\dfrac{\beta_ASA}{N}+\dfrac{\beta_ISI}{N}-(\gamma_A+\sigma+\mu)A\bigg]dt+\sigma_2 A dB_2+\displaystyle \int_{\mathbb{Z}}q_2(z)A(s-)\,\mathrm{{\tilde{N}}}(ds,dz),\\
	dI(t)=[\sigma A-(\gamma_I+\eta+\alpha+\mu)I]dt+\sigma_3 I dB_3+\displaystyle \int_{\mathbb{Z}}q_3(z)I(s-)\,\mathrm{{\tilde{N}}}(ds,dz),\\
	dR(t)=[b\nu+\rho S+\gamma_AA+(\gamma_I+\eta)I-(\xi+\mu)R]dt+\sigma_4 R dB_4+\displaystyle \int_{\mathbb{Z}}q_4(z)R(s-)\,\mathrm{{\tilde{N}}}(ds,dz)
	\end{array} \right.
\end{eqnarray}
In this paper, we analyze and investigate the behavior of stochastic SAIRS-type model with vaccination, where the role of asymptomatic and symptomatic infectious individuals is explicitly considered in the epidemic dynamics which it is performed by a Lévy process.  This original idea extends the studies presented in [our] and gives us a general view of the disease dynamics under different scenarios of random perturbations. 
\\
The outline of the work is the following. In Section 2,  we firstly prove that there is a unique positive global solution of the stochastic SAIRS model \eqref{sys1}. Then, we investigate the sufficient conditions for the disease
extinction and persistence. In section 3: Some numerical solutions are presented to validate the obtained analytical findings.

\section{Main theoretical results}
Before exhibiting the pivotal outcome of this work, it is perhaps the most significant step to show if the model \eqref{sys1} admits a solution, and if this solution is unique, positive and global (in time). In what follows, we will provide some assumptions under which the well-posedness of the SAIRS model \eqref{sys1} is ensured. But before doing so, let us first introduce  the following fundamental assumptions on the jump-diffusion coefficients.

\begin{eqnarray*}
	&(\textbf{A1})&~\mbox{For each}~N>0,~\mbox{there exists}~L_N>0~\mbox{such that}\\&&
	\displaystyle \int_{\mathbb{Z}}\mid \mathcal{J}_i(x_1,z)-\mathcal{J}_i(x_2,z)\mid ^2 \, \mathrm{}\nu (dz)\leqslant L_N\mid x_1-x_2\mid ^2~~i=1,2,3,4, ~ \mbox{with}~\mid x_1\mid\vee \mid x_2\mid\leqslant N,
\end{eqnarray*}
where
\begin{eqnarray*}
	\mathcal{J}_1(x,z)&=&q_1(z)x~\mbox{for}~x=S(t-),\\
\mathcal{J}_2(x,z)&=&q_2(z)x~\mbox{for}~x=A(t-),\\
\mathcal{J}_3(x,z)&=&q_3(z)x~\mbox{for}~x=I(t-),\\
\mathcal{J}_4(x,z)&=&q_4(z)x~\mbox{for}~x=R(t-).
\end{eqnarray*}
\begin{align*}
	(\textbf{A2})~~1+q_i(z)>0,~z\in\mathbb{Z},~i=1,2,3,4~\mbox{and there exist a postive constant}~C~\mbox{such that}
\end{align*}
\begin{eqnarray*}
	\mid \log(1+q_i(z))\mid\leqslant C.
\end{eqnarray*}
For simplicity, it will be practical to use the following notations throughout the rest of the paper:
\begin{eqnarray*}
\tau_1&=&\dfrac{\mu}{\bigg(\rho+\mu+\dfrac{\sigma_1^2}{2}+\displaystyle \int_{\mathbb{Z}}[q_1(z)-\log(1+q_1(z))]\, \mathrm{}\nu (dz)\bigg)},\\
\tau_2&=&\bigg(\gamma_A+\sigma+\mu +\dfrac{\sigma_2^2}{2}+\displaystyle\int_{\mathbb{Z}}[q_2(z)-\log(1+q_2(z))]\, \mathrm{}\nu (dz)\bigg),\\
\kappa_1&=&\bigg(\rho+\mu+\dfrac{\sigma_1^2}{2}+\displaystyle \int_{\mathbb{Z}}[q_1(z)-\log(1+q_1(z))]\, \mathrm{}\nu (dz)\bigg)\\
\kappa_2&=&\bigg(\gamma_A+\sigma+\mu +\dfrac{\sigma_2^2}{2}+\displaystyle \int_{\mathbb{Z}}[q_2(z)-\log(1+q_2(z))]\, \mathrm{}\nu (dz)\bigg)\\
\mathcal{T}_s&=&\dfrac{b(\beta_A+\beta_I)}{\mu\bigg[\gamma_A+\sigma+\mu+\dfrac{\sigma_2^2}{2}+\displaystyle\int_{\mathbb{Z}}[q_2(z)-\log(1+q_2(z)]\,\mathrm{\nu}(dz)\bigg]},\\
\mathcal{T}_s^*&=&\dfrac{\beta_Ab(1-\nu)}{\kappa_1\kappa_2},\\
\langle x(t)\rangle &=&\displaystyle\int_{0}^t x(s)ds.
\end{eqnarray*}

\begin{theorem}
Under $(\textbf{A$1$})$ and $(\textbf{A$2$})$, for any given initial value $(S_0,A_0,I_0,R_0)$, there is a unique
solution $(S(t),A(t),I(t),R(t))$ of model \eqref{sys1} defined on $t\geqslant 0$, and will remain in $\mathbb{R}^4_+$ with probability one.
\end{theorem}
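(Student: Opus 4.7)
The plan is to follow the now-standard two-stage argument used for Lévy-driven SDE epidemic models: first produce a unique local solution from the local Lipschitz hypothesis, then rule out finite-time explosion and escape from the positive orthant by a Lyapunov--Khasminskii argument with jump-adapted stopping times.

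First I would observe that the drift of \eqref{sys1} is polynomial in $(S,A,I,R)$ and therefore locally Lipschitz on $\mathbb{R}^4_+$, while the diffusion coefficients are linear in the state. Combined with hypothesis \textbf{(A1)}, which provides the local Lipschitz bound on the jump coefficients $\mathcal{J}_i(\cdot,z)$ in $L^2(\nu)$, a standard truncation/contraction argument (cf.\ the theory of SDEs with jumps in \cite{levy, ley}) yields a unique $\mathcal{F}_t$-adapted càdlàg local solution $(S(t),A(t),I(t),R(t))$ up to some explosion time $\tau_e$, starting from the arbitrary initial condition $(S_0,A_0,I_0,R_0)\in\mathbb{R}^4_+$.

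The heart of the proof is showing $\tau_e=\infty$ a.s.\ and that the solution stays in $\mathbb{R}^4_+$. For a large integer $k_0$ chosen so that each component of the initial data lies in $[1/k_0,k_0]$, I would introduce the stopping times
\begin{equation*}
\tau_k=\inf\Bigl\{t\in[0,\tau_e):\min\{S(t),A(t),I(t),R(t)\}\leqslant 1/k\text{ or }\max\{S(t),A(t),I(t),R(t)\}\geqslant k\Bigr\},
\end{equation*}
and set $\tau_\infty=\lim_{k\to\infty}\tau_k\leqslant\tau_e$. It suffices to show $\tau_\infty=\infty$ a.s., which I would do by contradiction: assume $\mathbb{P}(\tau_\infty\leqslant T)>\varepsilon$ for some $T,\varepsilon>0$, and derive a uniform-in-$k$ bound that fails in the limit. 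The Lyapunov function I would use is the classical
\begin{equation*}
V(S,A,I,R)=(S-1-\log S)+(A-1-\log A)+(I-1-\log I)+(R-1-\log R),
\end{equation*}
which is nonnegative on $\mathbb{R}^4_+$ and blows up at the boundary $\{0\}$ and at infinity.

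Applying Itô's formula for semimartingales with jumps gives
\begin{equation*}
dV=\mathcal{L}V\,dt+\text{(martingale terms in }dB_i\text{ and }\widetilde{N}(dt,dz)\text{)},
\end{equation*}
where $\mathcal{L}V$ collects the drift, the quadratic variation terms $\tfrac{\sigma_i^2}{2}$, and the Lévy compensator contributions of the form $\int_{\mathbb{Z}}[q_i(z)-\log(1+q_i(z))]\nu(dz)$. Here assumption \textbf{(A2)} is exactly what is needed: $1+q_i(z)>0$ keeps the components positive after a jump (so $\log S$ etc.\ remain well defined), and the uniform bound on $|\log(1+q_i(z))|$ makes the integrals against $\nu(dz)$ finite. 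Using the explicit form of the drift and discarding favourable negative contributions, a straightforward estimate yields $\mathcal{L}V\leqslant K$ for some constant $K$ depending on $b,\beta_A,\beta_I,\mu,\rho,\xi,\gamma_A,\gamma_I,\sigma,\eta,\alpha,\nu,\sigma_i,\nu(dz),q_i$ but \emph{not} on $k$. Integrating up to $\tau_k\wedge T$ and taking expectations kills the martingale parts and gives $\mathbb{E}\,V(S(\tau_k\wedge T),\ldots)\leqslant V(S_0,\ldots)+KT$.

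The final step is the usual contradiction: on the event $\{\tau_k\leqslant T\}$, at least one of the four components equals $k$ or $1/k$ at time $\tau_k$, so $V$ at that time is at least $\min\{k-1-\log k,\,1/k-1+\log k\}$, which tends to $+\infty$. Multiplying by $\mathbb{P}(\tau_k\leqslant T)\geqslant\varepsilon/2$ for large $k$ produces a lower bound on the left-hand side that contradicts the uniform upper bound $V(S_0,\ldots)+KT$. Hence $\tau_\infty=\infty$ a.s., the solution is global, and by construction it never leaves $\mathbb{R}^4_+$. The only real obstacle is the bookkeeping of the jump terms in $\mathcal{L}V$: one must keep the positive $[q_i-\log(1+q_i)]$ integrands under control via \textbf{(A2)}, which is precisely why that hypothesis is imposed.
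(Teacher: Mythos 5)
Your proposal is correct and follows essentially the same route as the paper: a unique local solution from local Lipschitz continuity of the coefficients together with \textbf{(A1)}, then the stopping-time/contradiction argument with the Lyapunov function $V=\sum (x-1-\log x)$, the bound $\mathcal{L}V\leqslant K$ obtained via \textbf{(A2)}, and the blow-up of $V$ near the boundary of $\mathbb{R}^4_+$ to force $\tau_\infty=\infty$ a.s. The only differences are cosmetic (you even correct a typographical slip in the paper's definition of $V$, where the second term should read $A-1-\log A$), so nothing further is needed.
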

\begin{proof}
Due to the local Lipschitz condition of coefficients of system \eqref{sys1}, there exists a unique local solution ($S(t),A(t),I(t),R(t)$) on $t\in[0,\tau_e),$ ($\tau_e$ is the explosion time) for any starting  value ($S(0),A(0),I(0),R(0)$).
Let $m_0>1$ be sufficiently large for ($S(t),A(t),I(t),R(t)$) to enter in the interval $[\frac{1}{m_0}, m_0]$. For each integer $m\geqslant m_0$, we consider the following stopping time
\begin{equation}
	\tau_m = \inf \left \{t \in [0, \tau_e ) : S(t)\not\in (\frac{1}{m},m) \;\text{or}\;  A(t) \not\in (\frac{1}{m},m)\;\text{or}\;  I(t) \not\in (\frac{1}{m},m)\;\text{or}\;  R(t) \not\in (\frac{1}{m},m) \right\}.
\end{equation}
We set $\inf \emptyset = \infty$ ($\emptyset$ denotes the empty set). Obviously $\tau_m\leqslant \tau_e$ a.s., if $\tau_{\infty}=\infty$ a.s. is true, then $\tau_{e}=\infty$ a.s. which mean that $(S(t),A(t) ,I(t),R(t))\in \mathbb{R}^4_+$ a.s. for all $t \geqslant -\tau.$\\
We will proceed by contradiction. Assume that $\tau_m<\infty,$  then there exists a pair of constant $T > 0$ and $\varepsilon \in( 0 , 1 )$ such that
\begin{center}
	$\mathbb{P}\{\tau_{\infty} \leqslant T\}> \varepsilon$.
\end{center}
Therefore, there is an integer $m_1\geqslant m_0$ such that
\begin{center}
	$\mathbb{P}\{\tau_m  \leq T\}\geqslant \varepsilon$ for all $m\geqslant m_1$. 
\end{center}
Let a $C^2$-function $V: \Rb^4_+ \mapsto \Rb_+$ by
$$V(S,A,I,R)=(S-1-\log S)+(A-1-\log S)+(I-1-\log I) +(R-1-\log R),$$
The nonnegativity of this function can be seen from $ u-1-\log u \geq 0.$ \\
Making use It\^o's formula to $V$, we get

\begin{eqnarray}\label{ex}
	dV&=&LVdt+\sigma_1(S-1)dB_1+\sigma_2(A-1)dB_2+\sigma_3(I-1)dB_3+\sigma_4(R-1)dB_4\nonumber \\&&\nonumber
	+\displaystyle \int_{Z} [q_1(z)S-\log(1+q_1(z))] \, \mathrm{{\tilde{N}}}(ds,dz)+\displaystyle \int_{Z} [q_2(z)A-\log(1+q_2(z))] \, \mathrm{{\tilde{N}}}(ds,dz)\\&&+\displaystyle \int_{Z} [q_3(z)I-\log(1+q_3(z))] \, \mathrm{{\tilde{N}}}(ds,dz)+\displaystyle \int_{Z} [q_4(z)R-\log(1+q_4(z))] \, \mathrm{{\tilde{N}}}(ds,dz),
\end{eqnarray}
By using the condition $(\textbf{A2})$, we get
\begin{eqnarray*}
LV&\leq & b+ \xi +\beta_A+\beta_I+4\mu+\gamma_A+\gamma_I+\sigma+\eta+\alpha+\rho+\dfrac{\sigma_{1}^2+\sigma_{2}^2+\sigma_3^2+\sigma_4^2}{2}+\displaystyle\int_{\mathbb{Z}}[q_1(z)-\log(1+q_1(z)]\,\mathrm{\nu}(dz)\\&&
+\displaystyle\int_{\mathbb{Z}}[q_2(z)-\log(1+q_2(z)]\,\mathrm{\nu}(dz)+\displaystyle\int_{\mathbb{Z}}[q_3(z)-\log(1+q_3(z)]\,\mathrm{\nu}(dz)+\displaystyle\int_{\mathbb{Z}}[q_4(z)-\log(1+q_4(z)]\,\mathrm{\nu}(dz):=K
\end{eqnarray*}

Integrating both sides of \eqref{ex} between $0$ and $\tau_m \wedge T$ and taking expectation we get
\begin{eqnarray*}
	0&\leqslant & \mathbb{E} V(S(\tau_m \wedge T),A(\tau_m \wedge T), I(\tau_m \wedge T),R(\tau_m \wedge T))\\
	&\leq & V(S(0),A(0),I(0),R(0))+KT.
\end{eqnarray*}
Define for each $u>0, ~U(u):=\inf\lbrace V(x_1,x_2,x_3,x_4), x_i\geqslant u~~or~~x_i\leqslant \frac{1}{u},i=1,2,3,4\rbrace$, $x_1=S,~x_2=A,~x_2=I,~x_2=R,$ and we have $\lim_{u \rightarrow \infty} U(u)=\infty$.\\
Therefore
\begin{eqnarray*}
	V(S(0),A(0),I(0),R(0))+KT &\geqslant & \mathbb{E} [\mathds{1}_{\{\tau_m\leqslant T\}} V(S(\tau_m \wedge T),A(\tau_m \wedge T), I(\tau_m \wedge T),R(\tau_m \wedge T))]
	\\
	&\geqslant & \epsilon U(m).
\end{eqnarray*}
Letting $m\longrightarrow \infty$ leads to $\infty >	V(S(0),A(0),I(0),R(0))+KT=\infty $ which is a contradiction. Hence $\tau_{\infty}= \infty$ a.s. and the model has a unique global solution  $(S(t),A(t),I(t),R(t))\in \mathbb{R}^4_+$ a.s.
\end{proof}

\subsection{Stochastic extinction}
In mathematical epidemiology, our prime concern after proving the well-posedness is to know if the disease will disappear or it will continue to exist. In this subsection, we investigate the necessary conditions for the the dies outing of disease through stochastic approach modeling. \\
Before stating the main result of this section, we must firstly give the following useful lemmas:
\begin{lemma}\cite{38}
Let $(S(t),A(t),I(t),R(t))$ be the positive solution of system \eqref{sys1} with any given initial condition $(S(0),A(0),I(0),R(0))\in\mathbb{R}_+^4$. Then 
$\lim\limits_{t\rightarrow \infty}\dfrac{S(t)+A(t)+I(t)+R(t)}{t}=0$ a.s., further, if $\mu>\dfrac{\sigma_{1}^2\vee\sigma_{2}^2\vee\sigma_{3}^2\vee\sigma_4^2}{2}$, then
\begin{align*}
\lim\limits_{t\rightarrow \infty}\displaystyle\int_{0}^t S(s)dB_1(s) = 0,\\
\lim\limits_{t\rightarrow \infty}\displaystyle\int_{0}^t A(s)dB_2(s) = 0,\\
\lim\limits_{t\rightarrow \infty}\displaystyle\int_{0}^t I(s)dB_3(s) = 0,\\
\lim\limits_{t\rightarrow \infty}\displaystyle\int_{0}^t R(s)dB_4(s) = 0,\\
\end{align*}
\end{lemma}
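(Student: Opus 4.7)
The lemma is the standard moment-bound plus martingale SLLN statement for stochastic epidemic systems with jumps; I would prove it in two stages: first, a uniform $L^{p}$ bound on the total population $N=S+A+I+R$; then a Chebyshev--Borel--Cantelli transfer to pathwise rates, followed by the martingale SLLN for the stochastic integrals.

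\textbf{Step 1: uniform $L^{p}$ bound on $N$.} I would sum the four SDEs in \eqref{sys1} to get
\begin{equation*}
dN=\bigl(b-\mu N-\alpha I\bigr)\,dt+\sum_{i=1}^{4}\sigma_{i}X_{i}\,dB_{i}+\sum_{i=1}^{4}\int_{\mathbb{Z}}q_{i}(z)X_{i}(s^{-})\,\tilde{N}(ds,dz),
\end{equation*}
with $(X_{1},X_{2},X_{3},X_{4})=(S,A,I,R)$. Then I would apply It\^{o}--Kunita's formula to $V(N)=(1+N)^{p}$ for some $p>1$, using $(\textbf{A2})$ to guarantee that $(1+q_{i}(z))^{p}-1-pq_{i}(z)$ is bounded and $\nu$-integrable. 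Choosing $p$ close enough to $1$ so that the dissipative term $-p\mu N$ dominates the Brownian contribution $\tfrac{p(p-1)}{2}\sigma_{\max}^{2}N^{2}$ and the jump correction, one obtains $LV\leq C_{1}-C_{2}V$. A Gronwall-type argument on $\E V(N(t))$ then delivers $\sup_{t\geq 0}\E[N(t)^{p}]\leq M<\infty$.

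\textbf{Step 2: $N(t)/t\to 0$ a.s.} On each interval $[k,k+1]$ I would use Doob's maximal inequality together with a Burkholder--Davis--Gundy bound for the compensated-Poisson integrals to promote the pointwise moment bound to $\E[\sup_{k\leq t\leq k+1}N(t)^{p}]\leq C$. For any $\varepsilon>0$,
\begin{equation*}
\mathbb{P}\!\left(\sup_{k\leq t\leq k+1}N(t)>\varepsilon k\right)\leq \frac{C}{(\varepsilon k)^{p}},
\end{equation*}
which is summable for $p>1$. Borel--Cantelli then gives $\limsup_{t\to\infty}N(t)/t\leq \varepsilon$ a.s., and letting $\varepsilon\to 0$ proves the first assertion. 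Since each $X_{i}\leq N$, the same conclusion transfers to $S,A,I,R$ individually.

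\textbf{Step 3: SLLN for the stochastic integrals.} Under the extra hypothesis $\mu>\sigma_{\max}^{2}/2$, I would rerun Step 1 with $p=2$ to get $\sup_{t}\E[X_{i}(t)^{2}]<\infty$. The process $M_{i}(t):=\int_{0}^{t}X_{i}(s)\,dB_{i}(s)$ is a continuous local martingale with $\langle M_{i}\rangle_{t}=\int_{0}^{t}X_{i}^{2}(s)\,ds$; another Borel--Cantelli argument applied to the increments $\langle M_{i}\rangle_{k+1}-\langle M_{i}\rangle_{k}$ shows $\limsup_{t\to\infty}\langle M_{i}\rangle_{t}/t<\infty$ a.s. The strong law of large numbers for continuous local martingales (Mao's classical version) then yields $M_{i}(t)/t\to 0$ a.s., which is the intended reading of the stated limits.

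\textbf{Main obstacle.} The delicate point is the jump bookkeeping in Step 1: writing Kunita's formula for $(1+N)^{p}$ with four independent compensated-Poisson terms and verifying that $(\textbf{A2})$, together with $\int_{\mathbb{Z}}\min\{1,|q_{i}(z)|^{2}\}\,\nu(dz)<\infty$, is enough to make the jump contribution to $LV$ uniformly controllable without competing with the dissipation $-p\mu N$. Once that moment bound is secured, the remaining Borel--Cantelli and martingale-SLLN steps are entirely routine.
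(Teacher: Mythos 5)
The paper does not actually prove this lemma: it states that the argument is ``similar in spirit'' to Lemmas 2.1 and 2.2 of \cite{38} and omits it entirely. Your plan reconstructs precisely the standard argument behind that citation --- a $p$-th moment bound on $N=S+A+I+R$ obtained by applying It\^o's formula for jump diffusions to $(1+N)^{p}$ and exploiting the dissipative aggregate drift $b-\mu N-\alpha I$, then Doob/BDG plus Borel--Cantelli to pass from moment bounds to $N(t)/t\to 0$ a.s., and finally the martingale strong law for $\int_{0}^{t}X_{i}\,dB_{i}$ --- so in substance you are supplying the proof the authors chose to cite rather than write, and your reading of the displayed limits as $\frac{1}{t}\int_{0}^{t}X_{i}(s)\,dB_{i}(s)\to 0$ is the correct one. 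One caveat deserves to be made explicit, and it is exactly the ``main obstacle'' you flag without fully resolving: for $p$ near $1$ the jump correction $(1+q_{i})^{p}-1-pq_{i}$ vanishes as $p\downarrow 1$, so Steps 1--2 close with no extra hypothesis beyond $(\textbf{A2})$ and the finiteness of $\nu$; but for $p=2$ the correction equals $\int_{\mathbb{Z}}q_{i}^{2}(z)\,\nu(dz)\,X_{i}^{2}$, so the dissipation condition that actually closes the second-moment estimate is $2\mu>\sigma_{i}^{2}+\int_{\mathbb{Z}}q_{i}^{2}(z)\,\nu(dz)$ for each $i$, not the purely Brownian condition $\mu>\tfrac{1}{2}(\sigma_{1}^{2}\vee\cdots\vee\sigma_{4}^{2})$ appearing in the lemma. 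This is arguably a defect of the lemma as stated (inherited from transplanting the Brownian result of \cite{38} into the L\'evy setting), but if you carry out your plan you should either strengthen the hypothesis accordingly or note explicitly where the jump variance is being absorbed.
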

\begin{proof}
 The proof of this lemma is similar in spirit to that of lemmas 2.1 and 2.2 of \cite{38} and therefore it is omitted here.
\end{proof}
\begin{lemma}\label{strg}
	If a function which is continuous $\mathcal{M}=\{\mathcal{M}\}_{t\geq 0}$  exist at a local martingale such that at $t\rightarrow 0$ it dies then 
	\begin{eqnarray*}
		\lim_{t\rightarrow \infty}\langle \mathcal{M},\mathcal{M}\rangle_t=\infty, ~~\mbox{a.s.,}~~\Rightarrow~~\lim_{t\rightarrow \infty}\dfrac{\mathcal{M}_{t}}{\langle \mathcal{M},\mathcal{M} \rangle_t}=0,~~\mbox{a.s.}
	\end{eqnarray*}
	\begin{eqnarray*}
		\limsup_{t\rightarrow \infty}\dfrac{\langle \mathcal{M},\mathcal{M} \rangle_t}{t}<0,~~\mbox{a.s.,}~~\Rightarrow~~\lim_{t\rightarrow \infty}\dfrac{\mathcal{M}_{t}}{t}=0,~~\mbox{a.s.}
	\end{eqnarray*}
	
	\end{lemma}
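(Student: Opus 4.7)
The plan is to apply the Dambis--Dubins--Schwarz (DDS) time-change representation, which embeds any continuous local martingale $\mathcal{M}$ with $\mathcal{M}_0=0$ as a time-changed Brownian motion. Concretely, on a possibly enlarged probability space there exists a standard Brownian motion $B$ such that $\mathcal{M}_t = B_{\langle\mathcal{M},\mathcal{M}\rangle_t}$ for all $t\ge 0$ on the event $\{\langle\mathcal{M},\mathcal{M}\rangle_\infty=\infty\}$. This reduces both claims to the classical strong law of large numbers for Brownian motion, $B_s/s\to 0$ almost surely as $s\to\infty$.

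For the first implication, set $A_t := \langle\mathcal{M},\mathcal{M}\rangle_t$. Under the hypothesis $A_t\to\infty$ a.s., one writes
$$\frac{\mathcal{M}_t}{A_t}=\frac{B_{A_t}}{A_t},$$
and since $A_t\to\infty$, the right-hand side tends to $0$ a.s.\ by the Brownian SLLN evaluated along the (almost surely) divergent random times $A_t$.

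For the second implication, which I read as $\limsup_{t\to\infty}A_t/t<\infty$ (the paper's $<0$ is a typographical slip, since a quadratic variation is nonnegative), I would argue by cases. On the event $\{A_\infty<\infty\}$, the martingale $\mathcal{M}$ is $L^2$-bounded, so by martingale convergence it converges a.s.\ to a finite random variable and hence $\mathcal{M}_t/t\to 0$ trivially. On the complementary event $\{A_\infty=\infty\}$, the first part yields $\mathcal{M}_t/A_t\to 0$ a.s., and the decomposition
$$\frac{\mathcal{M}_t}{t}=\frac{\mathcal{M}_t}{A_t}\cdot\frac{A_t}{t}$$
combined with the bounded factor $A_t/t$ gives the desired conclusion.

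The most delicate point is the case split in part two together with the careful invocation of DDS when $A_\infty$ might be finite on a set of positive probability; on that set one should bypass DDS and appeal directly to convergence of $L^2$-bounded local martingales. Apart from this subtlety the proof is essentially a transcription of the classical continuous-martingale SLLN and mirrors Lemmas 2.1--2.2 in the reference the authors cite; for this reason it is quite natural for them to omit it, but writing out the two cases explicitly is what I would contribute.
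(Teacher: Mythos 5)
Your proposal is correct in substance, but note that the paper itself offers no proof of this lemma at all: it is stated as a classical fact (the strong law of large numbers for continuous local martingales, found e.g.\ in Lipster--Shiryaev or in Mao's monograph on stochastic differential equations), so there is no argument in the paper to compare against. Your Dambis--Dubins--Schwarz route is a clean way to supply the missing proof: the first implication reduces immediately to $B_s/s\to 0$ along the divergent times $A_t=\langle\mathcal{M},\mathcal{M}\rangle_t$, and your case split in the second implication is exactly the right structure. You are also right that the stated hypothesis $\limsup_t \langle\mathcal{M},\mathcal{M}\rangle_t/t<0$ is a typo for $<\infty$, since quadratic variation is nonnegative. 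The only point to tighten is the phrase ``$\mathcal{M}$ is $L^2$-bounded on the event $\{A_\infty<\infty\}$'': $L^2$-boundedness is a global property, not an eventwise one, so you should localize --- set $T_n=\inf\{t:\langle\mathcal{M},\mathcal{M}\rangle_t\geq n\}$, observe that $\mathcal{M}^{T_n}$ is a genuinely $L^2$-bounded martingale and hence converges a.s., and conclude convergence of $\mathcal{M}$ on $\{\langle\mathcal{M},\mathcal{M}\rangle_\infty<n\}$ before taking the union over $n$ (alternatively, apply DDS on the enlarged space and note $B_{A_t}\to B_{A_\infty}$ there). The textbook alternative to your approach proves the same statement via the exponential martingale inequality and a Borel--Cantelli argument, which avoids enlarging the probability space but is longer; either is acceptable, and yours is arguably the more transparent.
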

\begin{lemma}\cite{hep}\label{lem}
	Let $\mathcal{F}\in {\cal{C}}([0,\infty)\times \Omega,(0,\infty))$ and $\mathcal{H}\in {\cal{C}}([0,\infty)\times \Omega,\mathbb{R}).$ If there exist a positive constants $a_1$, $a_2$ and $T,$ such that
	$\log \mathcal{F}(t)\geq  a_1t-a_2\displaystyle \int_0^t \mathcal{F}(x)\, \mathrm{{}}d x+\mathcal{H}(t)$ a.s, for all $t\geqslant T$ and
	$\lim\limits_{t \rightarrow +\infty}\dfrac{\mathcal{H}(t)}{t}=0$ a.s, then  $\liminf\limits_{t \rightarrow +\infty}\langle \mathcal{F}(t) \rangle \geq \dfrac{a_1}{a_2}~~a.s.$
\end{lemma}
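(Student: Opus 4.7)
The plan is to convert the pointwise inequality for $\log \mathcal{F}(t)$ into a differential inequality for the integral $f(t) := \int_0^t \mathcal{F}(s)\,ds$ and extract from it the asymptotic lower bound on $f(t)/t$. The natural interpretation of $\langle\mathcal{F}(t)\rangle$ in the conclusion is the time-average $f(t)/t$, which is the form used downstream in persistence arguments.

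First, since $\mathcal{F}$ is positive and continuous, $f$ is $C^1$ with $f'(t) = \mathcal{F}(t)$. Exponentiating the hypothesis and multiplying through by $e^{a_2 f(t)}$ gives, for $t\geq T$,
\[
\frac{d}{dt}\bigl(e^{a_2 f(t)}\bigr) \;=\; a_2\, f'(t)\, e^{a_2 f(t)} \;\geq\; a_2\, e^{a_1 t + \mathcal{H}(t)}.
\]
Integrating from $T$ to $t$ yields the key pathwise lower bound
\[
e^{a_2 f(t)} \;\geq\; e^{a_2 f(T)} + a_2\int_T^t e^{a_1 s + \mathcal{H}(s)}\, ds.
\]

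Next, I would exploit $\mathcal{H}(t)/t \to 0$ a.s.\ to pin down the leading rate. Fix $\varepsilon\in(0,a_1)$. On the full-measure event where both the hypothesis and this limit hold, there is a (random) $T_\varepsilon \geq T$ with $\mathcal{H}(s) \geq -\varepsilon s$ for all $s\geq T_\varepsilon$. Plugging this into the previous bound and evaluating the resulting elementary exponential integral gives
\[
e^{a_2 f(t)} \;\geq\; \frac{a_2}{a_1 - \varepsilon}\bigl(e^{(a_1-\varepsilon)t} - e^{(a_1-\varepsilon)T_\varepsilon}\bigr), \qquad t\geq T_\varepsilon.
\]
Taking logarithms, dividing by $t$, and letting $t\to\infty$ produces $\liminf_{t\to\infty} a_2 f(t)/t \geq a_1 - \varepsilon$; letting $\varepsilon\downarrow 0$ then yields the desired $\liminf \geq a_1/a_2$.

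The only real subtlety is that the argument is carried out pathwise and the time $T_\varepsilon$ beyond which $\mathcal{H}(s)\geq -\varepsilon s$ is random, which is why the conclusion is only almost sure. Beyond that, one must check that the additive constant $e^{a_2 f(T)}$ and the subtracted term $e^{(a_1-\varepsilon)T_\varepsilon}$ do not corrupt the $\log$-over-$t$ limit; both are pathwise finite and are eventually dwarfed by the growing exponential $e^{(a_1-\varepsilon)t}$, so inside the logarithm the leading term extracts cleanly. No delicate stochastic machinery is required beyond selecting the good event—essentially everything is real analysis carried out pathwise.
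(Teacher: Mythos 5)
Your proof is correct, and it is essentially the standard argument for this lemma: the paper itself gives no proof (the statement is imported from \cite{hep}), and the integrating-factor computation you use --- exponentiating the hypothesis, multiplying by $e^{a_2\int_0^t\mathcal{F}}$, integrating from $T$ to $t$, and taking logarithms after absorbing $\mathcal{H}$ into the rate via a pathwise $T_\varepsilon$ --- is exactly the proof found in that reference and its antecedents. One minor remark: the paper's notation table literally defines $\langle x(t)\rangle=\int_0^t x(s)\,ds$ without the $1/t$, but your reading of $\langle\mathcal{F}(t)\rangle$ as the time average $\frac{1}{t}\int_0^t\mathcal{F}(s)\,ds$ is the intended one and is the only reading consistent with how $\langle\cdot\rangle$ is used in Theorems 2.2 and 2.3.
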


\begin{theorem}\label{oxo}
	Let $(S(t),A(t),I(t),R(t))$  be the solution of system \eqref{sys1} that starts from a given value $(S(0),A(0),I(0),R(0))\in\mathbb{R}_+^4$.
	If the following conditions  $\mu>\dfrac{\sigma_{1}^2\vee\sigma_{2}^2\vee\sigma_{3}^2\vee\sigma_4^2}{2}$ and $\mathcal{T}_s<1$ are satisfied, then the disease goes out of the population with chance one. That is to say\\
$$\lim\limits_{t\rightarrow \infty} A(t)=\lim\limits_{t\rightarrow \infty}\langle I(t)\rangle=0$$
Furthermore,
	\begin{eqnarray*}
\lim\limits_{t\rightarrow \infty}\langle S(t)\rangle&=&\dfrac{b(\mu(1-\nu)+\xi)}{\mu(\mu+\xi+\rho)},\\
\lim\limits_{t\rightarrow \infty}\langle R(t)\rangle&=&\dfrac{b(\mu \nu+\rho)}{\mu(\mu+\xi+\rho)}.
	\end{eqnarray*}
	
\end{theorem}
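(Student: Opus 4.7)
The argument splits naturally into two parts: establishing extinction of the two infected classes $A$ and $I$, and then identifying the time-averaged equilibria of $S$ and $R$. For the extinction step, I would apply the jump-diffusion It\^o formula to $V(t)=\log A(t)$, which yields
\begin{equation*}
d\log A = \Bigl[\tfrac{\beta_A S}{N}+\tfrac{\beta_I SI}{NA}-\kappa_2\Bigr]dt + \sigma_2\,dB_2 + \int_{\mathbb{Z}}\log(1+q_2(z))\,\tilde{\mathcal{N}}(dt,dz),
\end{equation*}
where $\kappa_2$ is exactly the constant appearing in the notation. Using $S\le N$ pointwise and the mean-population bound $\langle N\rangle_t\le b/\mu+o(1)$ (obtained by integrating $dN=[b-\mu N-\alpha I]dt+\text{noise}$ and invoking Lemma 1), I would upper-bound the time-averaged drift by $(\beta_A+\beta_I)\,b/\mu-\kappa_2$. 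The Gaussian and Poisson martingale parts vanish after the $1/t$ scaling by Lemma \ref{strg}, since their quadratic variations are linearly controlled by assumption \textbf{(A2)} and the hypothesis $\mu > \tfrac12 \max_i\sigma_i^2$. Consequently,
\begin{equation*}
\limsup_{t\to\infty}\tfrac{\log A(t)}{t}\ \le\ \tfrac{b(\beta_A+\beta_I)}{\mu}-\kappa_2\ =\ \kappa_2(\mathcal{T}_s-1)\ <\ 0,
\end{equation*}
which forces $A(t)\to 0$ almost surely. Integrating the linear $I$-equation and dividing by $t$ then yields $\langle I\rangle_t = \tfrac{\sigma}{\gamma_I+\eta+\alpha+\mu}\langle A\rangle_t + o(1)$, so $\lim_t\langle I(t)\rangle=0$ as well.

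For the time-averaged limits of $S$ and $R$, I would integrate the $S$- and $R$-equations from $0$ to $t$, divide by $t$, and send $t\to\infty$. The bilinear terms $\langle\beta_A SA/N\rangle_t$ and $\langle\beta_I SI/N\rangle_t$ vanish because $S/N\le 1$ and $A,I\to 0$; the boundary contributions $S(t)/t,\,R(t)/t$ vanish by Lemma 1; and the diffusion and Poisson martingale terms vanish by Lemma \ref{strg}. This leaves the closed linear system
\begin{align*}
0 &= b(1-\nu)-(\rho+\mu)\lim_{t\to\infty}\langle S\rangle_t + \xi\lim_{t\to\infty}\langle R\rangle_t,\\
0 &= b\nu+\rho\lim_{t\to\infty}\langle S\rangle_t - (\xi+\mu)\lim_{t\to\infty}\langle R\rangle_t,
\end{align*}
whose unique solution is $\lim\langle S\rangle=b(\mu(1-\nu)+\xi)/[\mu(\mu+\xi+\rho)]$ and $\lim\langle R\rangle=b(\mu\nu+\rho)/[\mu(\mu+\xi+\rho)]$; as a sanity check, they sum to $b/\mu$.

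The principal technical obstacle is handling the cross-term $\beta_I SI/(NA)$ inside $d\log A$: the ratio $I/A$ is not pointwise bounded, so one cannot just dominate it by a constant. The cleanest fix is to feed back the linear $I$-equation, which expresses $\int_0^t I\,ds$ as a linear function of $\int_0^t A\,ds$ plus terms that vanish after dividing by $t$, and to couple this with the mean-population estimate $\langle N\rangle_t\to b/\mu$; this is precisely what manufactures the factor $b/\mu$ that appears in the threshold $\mathcal{T}_s$. An alternative route is to replace $\log A$ by $\log(A+I)$, which avoids the bad ratio entirely at the price of a slightly different effective decay rate that one must identify with $\kappa_2$ using the cancellation of the $\pm\sigma A$ terms between the $A$- and $I$-equations.
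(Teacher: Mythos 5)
Your proposal follows essentially the same route as the paper: apply It\^o's formula to $\log A$, time-average, bound the drift from above by $\kappa_2(\mathcal{T}_s-1)<0$ using the population estimate $\langle N\rangle\le b/\mu+o(1)$, kill the Brownian and compensated-Poisson terms via the strong law for local martingales, deduce $A\to 0$ and hence $\langle I\rangle\to 0$ from the linear $I$-equation, and finally solve a $2\times 2$ linear system for $\lim\langle S\rangle$ and $\lim\langle R\rangle$ (your system, built from the $S$- and $R$-equations, is equivalent to the paper's, which uses the $S{+}A{+}R$ sum together with the $R$-equation). The one substantive difference is that you explicitly flag the cross-term $\beta_I SI/(NA)$ as the technical obstacle; the paper does not resolve it either, but instead passes from $\frac{1}{t}\int_0^t \beta_I\frac{S(s)I(s)}{N(s)A(s)}ds$ to $\beta_I\frac{\langle S\rangle\langle I\rangle}{\langle A\rangle\langle N\rangle}$ and then bounds this by $\beta_I\langle S\rangle$, neither step of which is justified (the time average of a quotient is not the quotient of time averages, and $\langle I\rangle\le\langle A\rangle\langle N\rangle$ is not established). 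So your diagnosis is correct and the gap is real in both arguments. Of your two proposed repairs, the first (feeding back the $I$-equation) controls $\langle I\rangle$ in terms of $\langle A\rangle$ but does not control the pathwise ratio $I/A$ appearing inside the drift of $\log A$, so it does not close the gap as stated; the second (working with $\log(A+cI)$ for a suitable weight $c>0$) is the standard and workable fix, though it yields an extinction threshold that does not coincide with the stated $\mathcal{T}_s$ without further argument.
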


\begin{proof}
We obtain the following computations by direct integration of the proposed system \eqref{sys1}.
\begin{eqnarray}\label{ka}
\dfrac{S(t)-S(0)}{t}&=&	b(1-\nu)-\dfrac{\beta_A\langle S\rangle\langle A\rangle}{\langle N\rangle}-\dfrac{\beta_I\langle S\rangle\langle I\rangle}{\langle N\rangle}-(\rho+\mu)\langle S\rangle+\xi\langle R\rangle+\dfrac{\sigma_1}{t} \displaystyle\int_{0}^{t}S(s)dB_1(s) \nonumber\\&&+\dfrac{1}{t}\displaystyle\int_0^t \int_{\mathbb{Z}}q_1(z)S(s-)\,\mathrm{{\tilde{N}}}(ds,dz),\nonumber\\
\dfrac{A(t)-A(0)}{t}&=&	\dfrac{\beta_A\langle S\rangle\langle A\rangle}{\langle N\rangle}+\dfrac{\beta_I\langle S\rangle\langle I\rangle}{\langle N\rangle}-(\gamma_A+\sigma+\mu)\langle A\rangle+\dfrac{\sigma_2}{t} \displaystyle\int_{0}^{t}A(s)dB_2(s)+\dfrac{1}{t}\displaystyle\int_0^t \int_{\mathbb{Z}}q_2(z)A(s-)\,\mathrm{{\tilde{N}}}(ds,dz),\nonumber\\
\dfrac{I(t)-I(0)}{t}&=&\sigma\langle A\rangle-(\gamma_I+\eta+\alpha+\mu)\langle I\rangle+\dfrac{\sigma_3}{t} \displaystyle\int_{0}^{t}I(s)dB_3(s)+\dfrac{1}{t}\displaystyle\int_0^t \int_{\mathbb{Z}}q_3(z)I(s-)\,\mathrm{{\tilde{N}}}(ds,dz),\nonumber\\\nonumber
\dfrac{R(t)-R(0)}{t}&=&b\nu+\gamma_A\langle A\rangle+(\gamma_I+\eta)\langle I\rangle+\rho\langle S\rangle-(\xi+\mu)\langle R\rangle+\dfrac{\sigma_4}{t} \displaystyle\int_{0}^{t}R(s)dB_4(s)\\&&+\dfrac{1}{t}\displaystyle\int_0^t \int_{\mathbb{Z}}q_4(z)R(s-)\,\mathrm{{\tilde{N}}}(ds,dz),
\end{eqnarray}	
which shows that 
\begin{equation}\label{pop}
\langle S\rangle=\dfrac{b}{\mu}-(\langle A\rangle+\langle R\rangle)-\dfrac{\alpha }{\mu}\langle I\rangle+\varUpsilon(t)
\end{equation}
where
\begin{eqnarray*}
\varUpsilon(t)&=&\dfrac{1}{\mu}\bigg\{\dfrac{\sigma_1}{t} \displaystyle\int_{0}^{t}S(s)dB_1(s)+\dfrac{1}{t}\displaystyle\int_0^t \int_{\mathbb{Z}}q_1(z)S(s-)\,\mathrm{{\tilde{N}}}(ds,dz)+\dfrac{\sigma_2}{t} \displaystyle\int_{0}^{t}A(s)dB_2(s)+\dfrac{1}{t}\displaystyle\int_0^t \int_{\mathbb{Z}}q_2(z)A(s-)\,\mathrm{{\tilde{N}}}(ds,dz)\\&&+\dfrac{\sigma_3}{t} \displaystyle\int_{0}^{t}I(s)dB_3(s)+\dfrac{1}{t}\displaystyle\int_0^t \int_{\mathbb{Z}}q_3(z)I(s-)\,\mathrm{{\tilde{N}}}(ds,dz)+\dfrac{\sigma_4}{t} \displaystyle\int_{0}^{t}R(s)dB_4(s)+\dfrac{1}{t}\displaystyle\int_0^t \int_{\mathbb{Z}}q_4(z)R(s-)\,\mathrm{{\tilde{N}}}(ds,dz)\\&&
-\bigg[\dfrac{S(t)-S(0)}{t}+\dfrac{A(t)-A(0)}{t}+\dfrac{I(t)-I(0)}{t}+\dfrac{R(t)-R(0)}{t}\bigg]\bigg\}.
\end{eqnarray*}
Applying Itô formula on the second equation of system \eqref{sys1}, we get
\begin{eqnarray}\label{zo}
d\log A(t)&=& \bigg[\beta_A\dfrac{S}{N}+\beta_I\dfrac{SI}{AN}-(\gamma_A+\sigma+\mu)-\dfrac{\sigma_2^2}{2}-\displaystyle \int_{\mathbb{Z}}[q_2(z)-\log(1+q_2(z))]\, \mathrm{}\nu (dz)\bigg] dt+\sigma_2dB_2(t)\nonumber\\&&+\displaystyle \int_{\mathbb{Z}}\log(1+q_2)\,\mathrm{{\tilde{N}}}(ds,dz),
\end{eqnarray}	
Integrating \eqref{zo} from $0$ to $t$, and then dividing by $t$ on both sides, we obtain
\begin{eqnarray}\label{op}	
\dfrac{\log A(t)-\log A(0)}{t}&=&\beta_A\dfrac{\langle S\rangle}{\langle N\rangle}+\beta_I\dfrac{\langle S\rangle \langle I \rangle}{\langle A\rangle \langle N\rangle}-(\gamma_A+\sigma+\mu)-\dfrac{\sigma_2^2}{2}+\dfrac{\sigma_2}{t}\displaystyle\int_{0}^{t}dB_2(s)\nonumber\\&&\nonumber-\displaystyle\int_{\mathbb{Z}}[q_2(z)-\log(1+q_2(z)]\,\mathrm{\nu}(dz)+\dfrac{1}{t}\displaystyle\int_0^t\int_{\mathbb{Z}}\log(1+q_2(z)\,\mathrm{\tilde{N}}(ds,dz)\\\nonumber
&\leq&(\beta_A+\beta_I)\langle S\rangle-(\gamma_A+\sigma+\mu)-\dfrac{\sigma_2^2}{2}+\dfrac{\sigma_2}{t}\displaystyle\int_{0}^{t}dB_2(s)-\displaystyle\int_{\mathbb{Z}}[q_2(z)-\log(1+q_2(z)]\,\mathrm{\nu}(dz)\\&&+\dfrac{1}{t}\displaystyle\int_0^t\int_{\mathbb{Z}}\log(1+q_2(z)\,\mathrm{\tilde{N}}(ds,dz).
\end{eqnarray}	
Combining \eqref{pop} with \eqref{op} yields
\begin{eqnarray*}
\dfrac{\log A(t)-\log A(0)}{t}&\leq & (\beta_A+\beta_I)(\dfrac{b}{\mu}-(\langle A\rangle+\langle R\rangle)-\dfrac{\alpha }{\mu}\langle I\rangle+\varUpsilon)-(\gamma_A+\sigma+\mu)-\dfrac{\sigma_2^2}{2}+\dfrac{\sigma_2}{t}\displaystyle\int_{0}^{t}dB_2(s)\\&&-\displaystyle\int_{\mathbb{Z}}[q_2(z)-\log(1+q_2(z)]\,\mathrm{\nu}(dz)+\dfrac{1}{t}\displaystyle\int_0^t\int_{\mathbb{Z}}\log(1+q_2(z)\,\mathrm{\tilde{N}}(ds,dz)\\
&\leq & \dfrac{b}{\mu}(\beta_A+\beta_I)-\bigg[\gamma_A+\sigma+\mu+\dfrac{\sigma_2^2}{2}+\displaystyle\int_{\mathbb{Z}}[q_2(z)-\log(1+q_2(z)]\,\mathrm{\nu}(dz)\bigg]\\&&+\dfrac{\sigma_2}{t}\displaystyle\int_{0}^{t}dB_2(s)+\dfrac{1}{t}\displaystyle\int_0^t\int_{\mathbb{Z}}\log(1+q_2(z)\,\mathrm{\tilde{N}}(ds,dz)+ (\beta_A+\beta_I)\varUpsilon(t).
\end{eqnarray*}	
Let $\mathcal{G}(t)=\dfrac{\sigma_2}{2}\displaystyle\int_{0}^{t}dB_2(s)+\displaystyle\int_0^t\int_{\mathbb{Z}}\log(1+q_2(z)\,\mathrm{\tilde{N}}(ds,dz)$. \\ which is known as the locally continuous martingale with a finite quadratic variation, and the similar way we can affirm that $\varUpsilon(t)$ will be also so. \\
Then, in virtue to the Lemma \ref{strg}, we obtain
\begin{eqnarray*}
\limsup\limits_{t\rightarrow\infty}\dfrac{\mathcal{G}(t)}{t}=0~~\mbox{and}~~\limsup\limits_{t\rightarrow\infty}\dfrac{\varUpsilon(t)}{t}=0.	
\end{eqnarray*}
If $\mathcal{T}_s<1$, we  get
\begin{eqnarray*}
\limsup\limits_{t\rightarrow\infty}\dfrac{\log A(t)}{t}&\leq& \bigg[\gamma_A+\sigma+\mu+\dfrac{\sigma_2^2}{2}+\displaystyle\int_{\mathbb{Z}}[q_2(z)-\log(1+q_2(z)]\,\mathrm{\nu}(dz)\bigg]\bigg(\mathcal{T}_s-1\bigg)<0~~\mbox{a.s.}
\end{eqnarray*}
which implies that 
\begin{eqnarray}\label{ui}
\lim_{t\rightarrow \infty}A(t)=0~~\mbox{a.s.}
\end{eqnarray}
Furthermore, we will use Eq. \eqref{ui} to solve the 3rd classe of the model \eqref{sys1}. Combined with the restriction of integration with a limit, that is,
from $0$ to $t$, and then divide it by $t$, we obtain
\begin{eqnarray*}
\dfrac{I(t)-I(0)}{t}&=&\sigma\langle A\rangle-(\gamma_I+\eta+\alpha+\mu)\langle I\rangle+\dfrac{\sigma_3}{t} \displaystyle\int_{0}^{t}I(s)dB_3(s)+\dfrac{1}{t}\displaystyle\int_0^t \int_{\mathbb{Z}}q_3(z)I(s-)\,\mathrm{{\tilde{N}}}(ds,dz).
\end{eqnarray*}
Hence
\begin{eqnarray*}
\langle I(t)\rangle&=&\dfrac{1}{\gamma_I+\eta+\alpha+\mu}\bigg[\sigma\langle A(t)\rangle+\dfrac{I(t)-I(0)}{t}+\dfrac{\sigma_3}{t}\displaystyle\int_{0}^{t}I(s)dB_3(s)
+\dfrac{1}{t}\displaystyle\int_0^t\int_{\mathbb{Z}}q_3(z)I(t-)\,\mathrm{\tilde{N}}(ds,dz)\bigg].
\end{eqnarray*}
which implies that $\lim_{t\rightarrow \infty}\langle I(t)\rangle=0$ a.s.
\\
Let us calculate the following equation by using Eq. \refeq{ka},
\begin{eqnarray*}
\dfrac{S(t)-S(0)+A(t)-A(0)+R(t)-R(0)}{t}&=& b-\mu\langle S(t)\rangle-(\sigma+\mu) \langle A(t)\rangle+ (\gamma_I+\eta)\langle I(t)\rangle -\mu\langle R(t)\rangle\\&&+\dfrac{\sigma_1}{t} \displaystyle\int_{0}^{t}S(s)dB_1(s) +\dfrac{1}{t}\displaystyle\int_0^t\int_{\mathbb{Z}}q_1(z)S(s-)\,\mathrm{{\tilde{N}}}(ds,dz)\\&&+\dfrac{\sigma_2}{t} \displaystyle\int_{0}^{t}A(s)dB_2(s)+\dfrac{1}{t}\displaystyle\int_0^t \int_{\mathbb{Z}}q_2(z)A(s-)\,\mathrm{{\tilde{N}}}(ds,dz)\\&&+\dfrac{\sigma_4}{t} \displaystyle\int_{0}^{t}R(s)dB_4(s)+\dfrac{1}{t}\displaystyle\int_0^t \int_{\mathbb{Z}}q_4(z)R(s-)\,\mathrm{{\tilde{N}}}(ds,dz),
\end{eqnarray*}
we get $\langle S(t)+R(t)\rangle=\dfrac{b}{\mu}+\varXi(t)$.\\
The operator $\varXi(t)$ is defined as follows:
\begin{eqnarray*}
\varXi(t)&=&-\dfrac{1}{\mu}\bigg[\dfrac{S(t)-S(0)}{t}-\dfrac{\sigma_1}{t} \displaystyle\int_{0}^{t}S(s)dB_1(s)-\dfrac{1}{t}\displaystyle\int_0^t\int_{\mathbb{Z}}q_1(z)S(s-)\,\mathrm{{\tilde{N}}}(ds,dz)+\dfrac{A(t)-A(0)}{t}\\&&-\dfrac{\sigma_2}{t} \displaystyle\int_{0}^{t}A(s)dB_2(s)-\dfrac{1}{t}\displaystyle\int_0^t \int_{\mathbb{Z}}q_2(z)A(s-)\,\mathrm{{\tilde{N}}}(ds,dz)+\dfrac{R(t)-R(0)}{t}-\dfrac{\sigma_4}{t} \displaystyle\int_{0}^{t}R(s)dB_4(s)\\&&-\dfrac{1}{t}\displaystyle\int_0^t \int_{\mathbb{Z}}q_4(z)R(s-)\,\mathrm{{\tilde{N}}}(ds,dz)\bigg].
\end{eqnarray*}
Clearly, $\varXi(t)$ goes to zero is the same as $t$ goes to $\infty$. So that we can\\
$\langle S(t)\rangle+\langle R(t)\rangle=\dfrac{b}{\mu}$\\
Similarly, we get by using the last equation of the system \refeq{ka},
\begin{eqnarray}\label{exx}
\rho\langle S(t)\rangle+\gamma_A\langle A(t)\rangle+(\gamma_I+\eta)\langle I(t)\rangle-(\xi+\mu)\langle R(t)\rangle&=&-b\nu-\dfrac{R(t)-R(0)}{t}-\dfrac{\sigma_4}{t} \displaystyle\int_{0}^{t}R(s)dB_4(s)\nonumber\\&&-\dfrac{1}{t}\displaystyle\int_0^t \int_{\mathbb{Z}}q_4(z)R(s-)\,\mathrm{{\tilde{N}}}(ds,dz)
\end{eqnarray}
By taking t goes to $\infty$, Eq. \ref{exx} can be as

\begin{eqnarray}\label{exy}
	\rho\langle S(t)\rangle\rangle-(\xi+\mu)\langle R(t)\rangle=-b\nu
\end{eqnarray}

Thus, we obtain the following result from Eqs. \ref{exx} to \ref{exy}

\begin{eqnarray}
\lim\limits_{t\rightarrow \infty}\langle S(t)\rangle=\dfrac{b(\mu(1-\nu)+\xi)}{\mu(\mu+\xi+\rho)},
\end{eqnarray}
and
\begin{eqnarray}
	\lim\limits_{t\rightarrow \infty}\langle R(t)\rangle=\dfrac{b(\mu \nu+\rho)}{\mu(\mu+\xi+\rho)}.
\end{eqnarray}
This completes the proof. 

\end{proof}

\subsection{Persistence in mean}
After having studied the extinction of the disease, we turn now to explore its persistence in the mean, but before stating the main result, we will firstly define the persistency in the average.
\begin{definition}
	Persistence in the mean. For system \eqref{sys1}, the infectious individuals $A(t)$ and $I(t)$ are said
	to be strongly persistent in the mean, or just persistent in the mean, if $\liminf\limits_{t\rightarrow \infty}\langle A(t)+I(t)\rangle  >0$
\end{definition}

\begin{theorem}\label{opo}
Let $(S(t),A(t),I(t),R(t))$  be the solution of system \eqref{sys1} that starts from an initial value $(S(0),A(0),I(0),R(0))\in\mathbb{R}_+^4$. If $\mathcal{T}_s^*>1$, then the  disease presented by equation \eqref{sys1} will persist in the mean almost surely.	
\end{theorem}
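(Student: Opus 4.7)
The plan is to establish persistence by applying It\^o's formula to a weighted combination $V(t)=a_1\log S(t)+a_2\log A(t)$ and then invoking an AM--GM inequality that encodes the threshold $\mathcal{T}_s^*$. A direct It\^o computation (parallel to that in \eqref{zo}) first yields
\begin{eqnarray*}
d\log S=\Bigl[\frac{b(1-\nu)}{S}+\frac{\xi R}{S}-\beta_A\frac{A}{N}-\beta_I\frac{I}{N}-\kappa_1\Bigr]dt+\sigma_1 dB_1+\int_{\mathbb{Z}}\log(1+q_1(z))\,\mathrm{\tilde{N}}(ds,dz),
\end{eqnarray*}
and \eqref{zo} gives the analogous expression for $d\log A$ with constant $\kappa_2$ and drift $\frac{\beta_A S}{N}+\frac{\beta_I SI}{AN}$. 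I would choose the weights $a_1=\sqrt{\kappa_2/\kappa_1}$ and $a_2=\sqrt{\kappa_1/\kappa_2}$ so that $a_1\kappa_1+a_2\kappa_2=2\sqrt{\kappa_1\kappa_2}$, and discard the nonnegative drift contributions $\frac{\xi R}{S}$ and $\frac{\beta_I SI}{AN}$. AM--GM applied to the two remaining positive drift pieces yields
\begin{eqnarray*}
\frac{a_1 b(1-\nu)}{S}+\frac{a_2\beta_A S}{N}\geq 2\sqrt{\frac{a_1 a_2\,b(1-\nu)\beta_A}{N}}\geq 2\sqrt{b(1-\nu)\beta_A}=2\sqrt{\kappa_1\kappa_2\,\mathcal{T}_s^*},
\end{eqnarray*}
the second inequality exploiting $N\leq 1$ (since $S,A,I,R$ are fractions). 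The deterministic surplus in the drift of $V$ is therefore $2\sqrt{\kappa_1\kappa_2}(\sqrt{\mathcal{T}_s^*}-1)$, which is strictly positive precisely under the hypothesis $\mathcal{T}_s^*>1$.

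Integrating from $0$ to $t$, dividing by $t$, and bounding $A/N,I/N\leq A,I$, I would obtain an inequality of the form
\begin{eqnarray*}
\frac{V(t)-V(0)}{t}\geq 2\sqrt{\kappa_1\kappa_2}\bigl(\sqrt{\mathcal{T}_s^*}-1\bigr)-K\langle A(t)+I(t)\rangle+\frac{\mathcal{M}(t)}{t},
\end{eqnarray*}
where $K=a_1(\beta_A+\beta_I)>0$ and $\mathcal{M}$ collects the Brownian and compensated-Poisson martingale terms. The strong law for local martingales (Lemma \ref{strg}) gives $\mathcal{M}(t)/t\to 0$ a.s., while $S,A\leq N$ together with Lemma 1 yields $\limsup_{t\to\infty}V(t)/t\leq 0$. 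Rearranging and passing to the $\liminf$ in $t$ leads to
\begin{eqnarray*}
\liminf_{t\to\infty}\langle A(t)+I(t)\rangle\geq\frac{2\sqrt{\kappa_1\kappa_2}\bigl(\sqrt{\mathcal{T}_s^*}-1\bigr)}{K}>0,
\end{eqnarray*}
which is exactly persistence in the mean. An equivalent route would use the $I$-equation to bound $\langle I\rangle$ linearly in $\langle A\rangle$ and then apply Lemma \ref{lem} with $\mathcal{F}=A$.

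The most delicate step is the rigorous handling of the stochastic total population $N(t)$ appearing in the denominators $S/N,A/N,I/N$: the modelling convention implicitly uses $N\approx 1$, whereas the SDEs allow $N$ to fluctuate. A fully rigorous version would extract an a priori two-sided bound $c\leq N(t)\leq N^*$ almost surely (or in a suitable averaged sense) from the aggregated equation $dN=(b-\mu N-\alpha I)dt+\cdots$, after which the AM--GM step goes through with quantitatively modified but still positive constants. The remaining ingredients---It\^o's formula for jump-diffusions, the AM--GM trick, and the auxiliary Lemmas \ref{strg} and \ref{lem}---are then routine.
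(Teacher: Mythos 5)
Your proposal is correct and follows essentially the same route as the paper's proof: a weighted combination of $\log S$ and $\log A$ (you take positive weights with $a_1a_2=1$, while the paper takes $\mathcal{W}=-\tau_1\log S-\tau_2\log A$ with the normalization $\tau_i\kappa_i=b(1-\nu)$ --- the same AM--GM balancing up to sign and scaling), followed by the strong law for local martingales and a rearrangement giving $\liminf_{t\to\infty}\langle A(t)+I(t)\rangle$ bounded below by a positive multiple of $\sqrt{\mathcal{T}_s^*}-1$. Your closing caveat about the fluctuating denominator $N(t)$ is well taken, since the paper's own argument silently uses both $N\leq 1$ (to replace $S/N$ by $S$) and $N\geq 1$ (to replace $A/N$, $I/N$ by $A$, $I$), so your treatment is no less rigorous than the original.
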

\begin{proof}
Let us consider the function $\mathcal{W}$ the function defined by
\begin{eqnarray*}
\mathcal{W}:\mathbb{R}^2_+&\longrightarrow &\mathbb{R}\\
(S,I)&\longrightarrow &-\tau_{1} \log S-\tau_{2} \log A
\end{eqnarray*} 
By the application of Itô's formula, we get
	\begin{eqnarray*}
		d \mathcal{W}=\mathcal{L} \mathcal{W}dt-\tau_{1} \sigma_{1} d B_{1}(t)-\tau_{2} \sigma_{2} d B_{2}(t)-\tau_1\displaystyle \int_{\mathbb{Z}}\log (1+q_1(z))\, \mathrm{{\tilde{N}}}(ds,dz)-\tau_2\displaystyle \int_{\mathbb{Z}}\log (1+q_2(z))\, \mathrm{{\tilde{N}}}(ds,dz),
	\end{eqnarray*}
	where
	\begin{eqnarray*}
		\mathcal{L} \mathcal{W}&=&-\tau_{1}\mathcal{L} ( \log S)-\tau_{2}\mathcal{L} ( \log A)\\
		&=&\tau_{1}\dfrac{\beta_A }{N}+\tau_{1}\dfrac{\beta_I I}{N}-\dfrac{\tau_{1}b(1-\nu)}{S}-\tau_{1}\dfrac{\xi R}{S}+\tau_{1}(\rho+\mu)+\tau_1\bigg\{\dfrac{\sigma_1^2}{2}+\displaystyle \int_{\mathbb{Z}}[q_1(z)-\log(1+
		q_1(z))]\, \mathrm{}\nu (dz) \bigg\}\\&&
		-\tau_{2}\dfrac{\beta_A S}{N}-\tau_{2}\dfrac{\beta_I S I}{N}+\tau_{2}(\gamma_A+\sigma+\mu)+\tau_2\bigg\{\dfrac{\sigma_2^2}{2}+\displaystyle \int_{\mathbb{Z}}[q_2(z)-\log(1+
		q_2(z))]\, \mathrm{}\nu (dz) \bigg\}\\
		&\leq & (-\dfrac{\tau_{1}b(1-\nu)}{S}-\tau_2\beta_A S)+\tau_1\bigg[\rho+\mu+\dfrac{\sigma_1^2}{2}+\displaystyle \int_{\mathbb{Z}}[q_1(z)-\log(1+
		q_1(z))]\, \mathrm{}\nu (dz)\bigg]\\&&
		+\tau_{2}\bigg[\gamma_A+\sigma+\mu +\dfrac{\sigma_2^2}{2}+\displaystyle \int_{\mathbb{Z}}[q_2(z)-\log(1+
		q_2(z))]\, \mathrm{}\nu (dz)\bigg]+\tau_1\beta_A+\tau_1\beta_I I\\
		&\leq & -2\sqrt{\tau_1\tau_2b(1-\nu)\beta_A}+\tau_1\bigg[\rho+\mu+\dfrac{\sigma_1^2}{2}+\displaystyle \int_{\mathbb{Z}}[q_1(z)-\log(1+
		q_1(z))]\, \mathrm{}\nu (dz)\bigg]\\&&
		+\tau_{2}\bigg[\gamma_A+\sigma+\mu +\dfrac{\sigma_2^2}{2}+\displaystyle \int_{\mathbb{Z}}[q_2(z)-\log(1+
		q_2(z))]\, \mathrm{}\nu (dz)\bigg]+\tau_1\beta_A+\tau_1\beta_I I.
\\
	&\leq & -2\sqrt{\dfrac{b(1-\nu)^2\beta_Ab(1-\nu)}{\kappa_1\kappa_2}}+2b(1-\nu)+\tau_1\beta_AA+\tau_1\beta_II\\
		&=&-2b(1-\nu)\bigg[\sqrt{\dfrac{\beta_Ab(1-\nu)}{\kappa_1\kappa_2}}-1\bigg]+\tau_1\beta_AA+\tau_1\beta_II\\
		&=&-2b(1-\nu)\bigg[\sqrt{\mathcal{T}^*_s}-1\bigg]+\tau_1\beta_AA+\tau_1\beta_II.
	\end{eqnarray*}
Therefore
	\begin{eqnarray}\label{ouk}
		\dfrac{\mathcal{W}(S(t),A(t))-\mathcal{W}(S(0),A(0))}{t}&\leq&-2b(1-\nu)(\sqrt{\mathcal{T}^*_s}-1)+\tau_1\beta_A\langle A(t)\rangle-\dfrac{\tau_1\sigma_1 B_1(t)}{t}-\dfrac{\tau_2\sigma_2 B_2(t)}{t}\nonumber\\&&\nonumber
		-\dfrac{\tau_1\displaystyle\int_0^t \int_{\mathbb{Z}}\log (1+q_1(z))\, \mathrm{{\tilde{N}}}(ds,dz)}{t}-\dfrac{\tau_2\displaystyle\int_0^t \int_{\mathbb{Z}}\log (1+q_2(z))\, \mathrm{{\tilde{N}}}(ds,dz)}{t}\\
		&\leq&-2b(1-\nu)(\sqrt{\mathcal{T}^*_{s}}-1)+\tau_1\beta_A\langle A(t)\rangle+\tau_1\beta_I\langle I(t)\rangle+\psi(t),
	\end{eqnarray}
	where $$\psi(t)=-\dfrac{\tau_1\sigma_1B_1(t)}{t}-\dfrac{\tau_2\sigma_2 B_2(t)}{t}-\dfrac{\tau_1\displaystyle\int_0^t \int_{\mathbb{Z}}\log (1+q_1(z))\,\mathrm{{\tilde{N}}}(ds,dz)}{t}-\dfrac{\tau_2\displaystyle\int_0^t \int_{\mathbb{Z}}\log (1+q_2(z))\,\mathrm{{\tilde{N}}}(ds,dz)}{t}.$$
On the other hand it is clear by virtue of Lemma \ref{strg} that
\begin{eqnarray}\label{aso}
	\lim\limits_{t\rightarrow \infty}\psi(t)=0,	
\end{eqnarray}	
From Eq. \ref{ouk}, we have
\begin{eqnarray}\label{houd}
\tau_1\beta_A\langle A(t)\rangle+\tau_1\beta_I\langle I(t)\rangle\geq 2b(1-\nu)(\sqrt{\mathcal{T}^*_s}-1)-\psi(t)+\dfrac{\mathcal{W}(S(t),A(t))-\mathcal{W}(S(0),A(0))}{t}	
\end{eqnarray}
By Lemma \ref{lem} and Eq. \ref{aso}  and Taking the inferior limit on both sides of \eqref{houd} yields
	$$\liminf\limits_{t\rightarrow\infty}(\langle A(t)\rangle+\langle I(t)\rangle)\geq \dfrac{2b(1-\nu)(\sqrt{\mathcal{T}^*_s}-1)}{\tau_1\beta} $$
	where $\beta=\max(\beta_A,\beta_I)$. \\\
So if $\mathcal{T}^*_s> 1$,  then the disease will persist in the mean as claimed, which completes the proof. 	
\end{proof}

\section{Numerical simulation}
In what follows, we apply our results to real ongoing COVID-19 pandemic data. Most of the parametric values appearing in Table \ref{Tab1} are selected from real data available in existing literature (\cite{1,nmr,nmr1,nmr2}) and the rest of them are just assumed for numerical calculations.  We employ the Euler-Maruyama scheme to seek a numerical solution to the continuous part of \eqref{sys1} and the scheme presented in \cite{num} for the Lévy jumps part. We choose $q_i = 0.198\dfrac{ z}{1+z^2},~z=0.5~~ \forall i=1,...,4$ which verifies all the previous sections conditions.
\begin{table}[h!]
	\begin{center}
		\begin{center}
			\caption{Table of parameters used in the numerical simulation}\label{Tab1}
		\end{center}
		\begin{tabular}{llll}
			Notation & Parameter description           &  Value range         \\
			\hline
			\hline
			$b$         & Influx rate of the population &   0.8        \\
			$\mu$            &  Natural death rate   & 0.1      \\ 
			$\alpha$       &  Disease-induced death rate    &    0.017 \\
			$\beta_A$          &  Transmission rate of infection from asymptomatic cases   &
			$(1.7 \times 10^{-9} ,5.2\times 10^{-3}) $         \\
			$\beta_I$          &   Transmission rate of infection from symptomatic cases   & [0,$\beta_A$]          \\
			$\gamma_A$       & Self-recovery rate  of asymptomatic individuals     &    0.15
			\\ 
			$\gamma_I$       &  Self-recovery rate of symptomatic individuals      &  0.1001 \\
			$\xi$       &  Loss of immunity rate of recovered individuals    &    
			0.42\\
			$\sigma$      & Progression rate of asymptomatic to symptomatic compartment     &   0.29\\
			$\eta$       &  Treatment rate of symptomatic individuals    &  0.53
			\\
			\hline
			\hline    
			
		\end{tabular}
	\end{center}
\end{table}
\begin{example}
For the data of Table \ref{Tab1} with the values $\nu=0.32$, $\rho=0.19$ and the quadruplets $(\sigma_{1},\sigma_{2}, \sigma_{3}, \sigma_{4}) = (0.1, 0.06, 0.08, 0.05)$, a direct computation, we derive $\mathcal{T}_s= 0.2974 < 1$  that is all less than one and $\mu=0.1>\dfrac{\sigma_{1}^2\vee\sigma_{2}^2\vee\sigma_{3}^2\vee\sigma_4^2}{2}=0.005.$ Consequently, the conditions of Theorem \ref{oxo} are satisfied, and the disease is expected to die out in the population. This result is well illustrated by the trajectories shown in Figs. \ref{fig:susceptible}–\ref{fig:recovered}, which depict the dynamics of the Susceptible (S), Asymptomatic (A), Infected (I), and Recovered (R) populations over time. The initial values used in the simulation are as follows: \textbf{Initial Susceptible Population} (\(S_0\)): \( 0.5 \), \textbf{Initial Asymptomatic Population} (\(A_0\)): \( 0.2 \), \textbf{Initial Infected Population} (\(I_0\)): \( 0.1 \), and \textbf{Initial Recovered Population} (\(R_0\)): \( 0.05 \). The time-series plots show the temporal evolution of each compartment, highlighting key trends such as the initial decline in the susceptible population due to infection spread, followed by subsequent recovery phases. A 3D surface plot provides a comprehensive view of all compartments over time, revealing intricate inter-compartment relationships and transitions. The heatmap scatter plot between the susceptible and recovered populations further illustrates the dynamic interaction, indicating the flow between these states. Finally, histograms of each compartment offer insights into the distribution and frequency of the population states, emphasizing the variability and stochastic nature of the epidemic's progression.

\end{example}

\begin{figure}[H]
    \centering
    \begin{minipage}{0.45\linewidth}
        \centering
        \includegraphics[width=\linewidth]{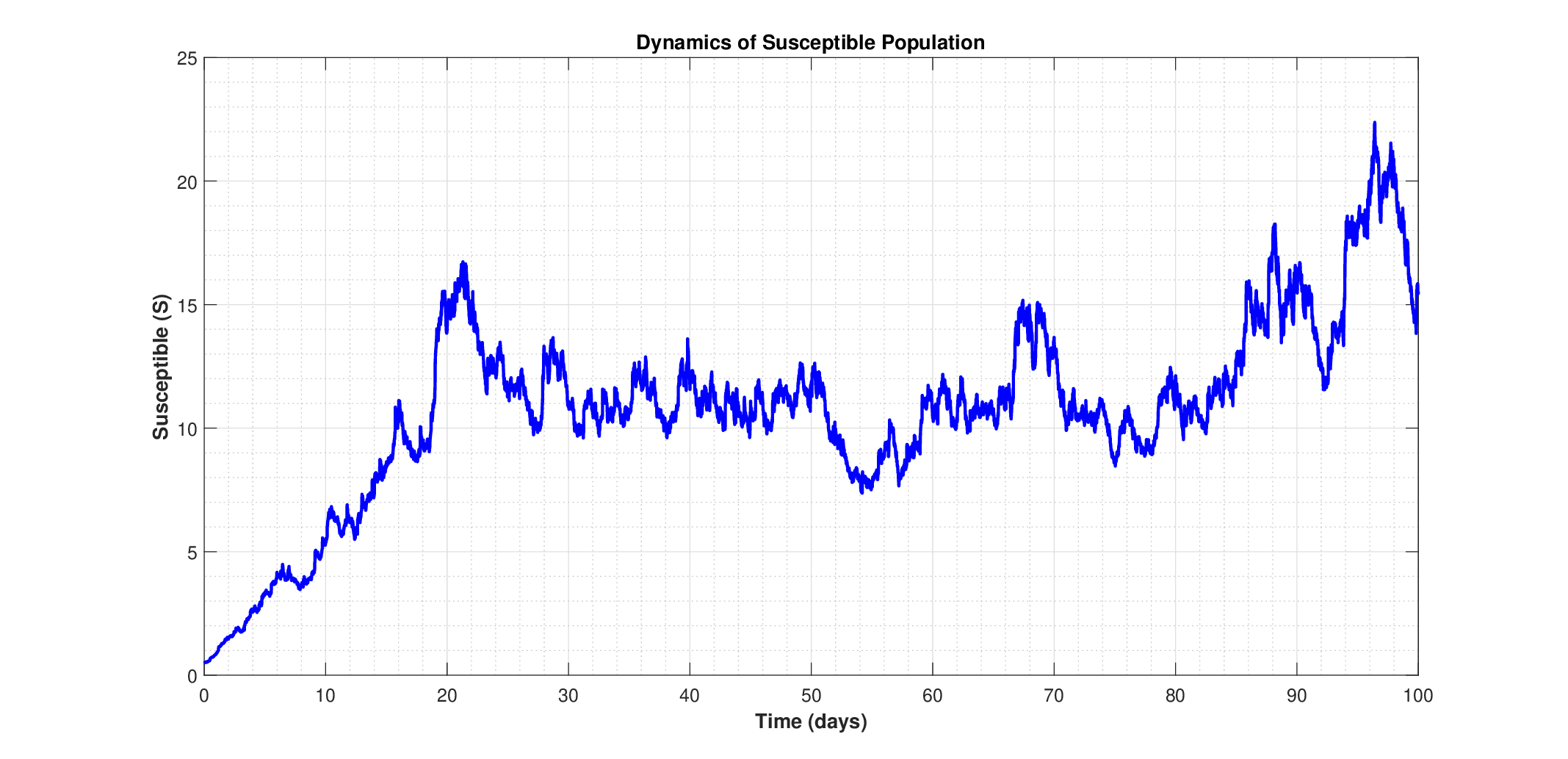}
        \caption{Dynamics of the Susceptible (S) \\ population over time.}
        \label{fig:susceptible}
    \end{minipage}\hfill
    \begin{minipage}{0.45\linewidth}
        \centering
        \includegraphics[width=\linewidth]{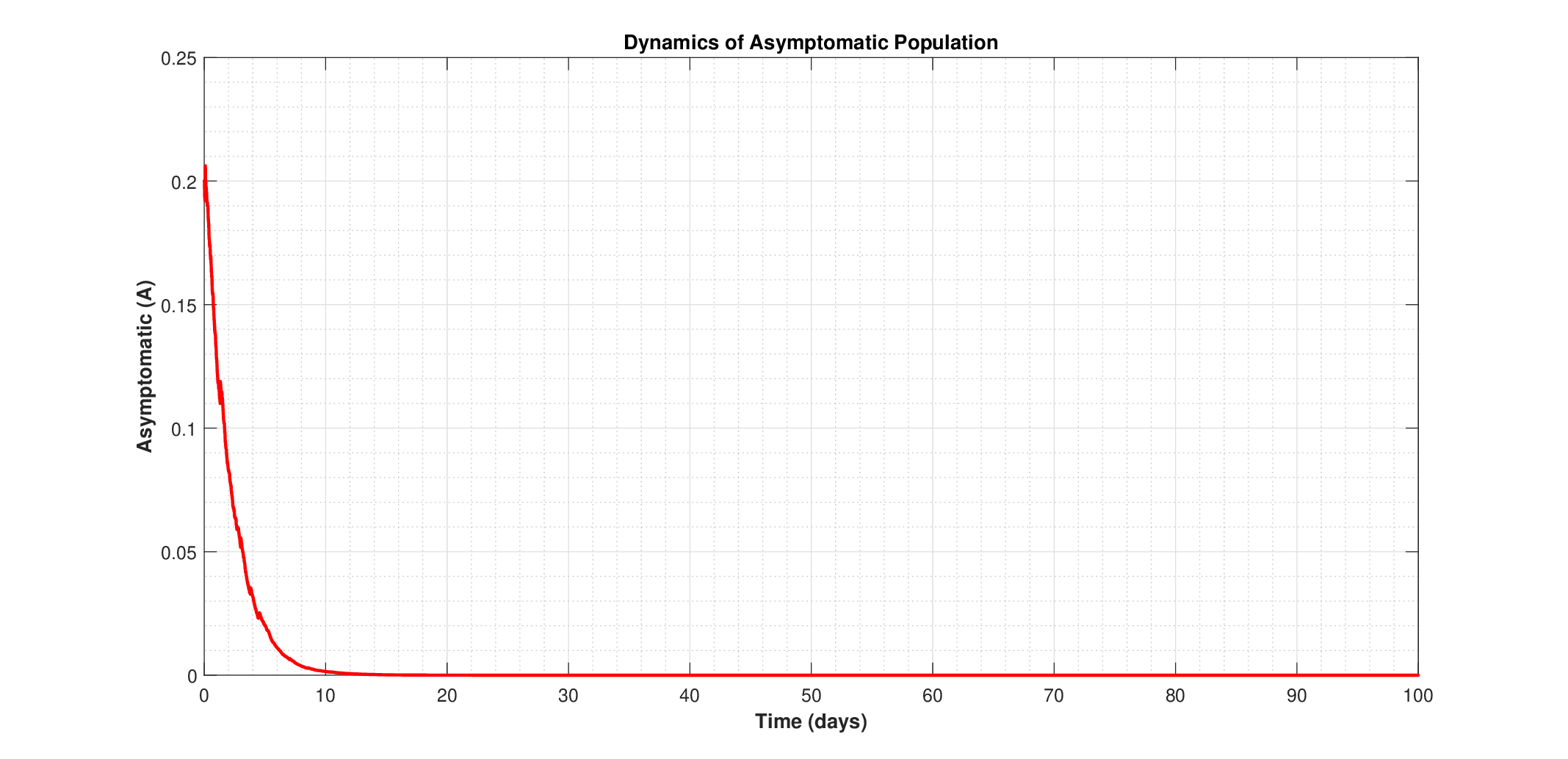}
        \caption{Dynamics of the Asymptomatic \\ (A) population over time.}
        \label{fig:asymptomatic}
    \end{minipage}

    \vspace{0.5cm} 

    \begin{minipage}{0.45\linewidth}
        \centering
        \includegraphics[width=\linewidth]{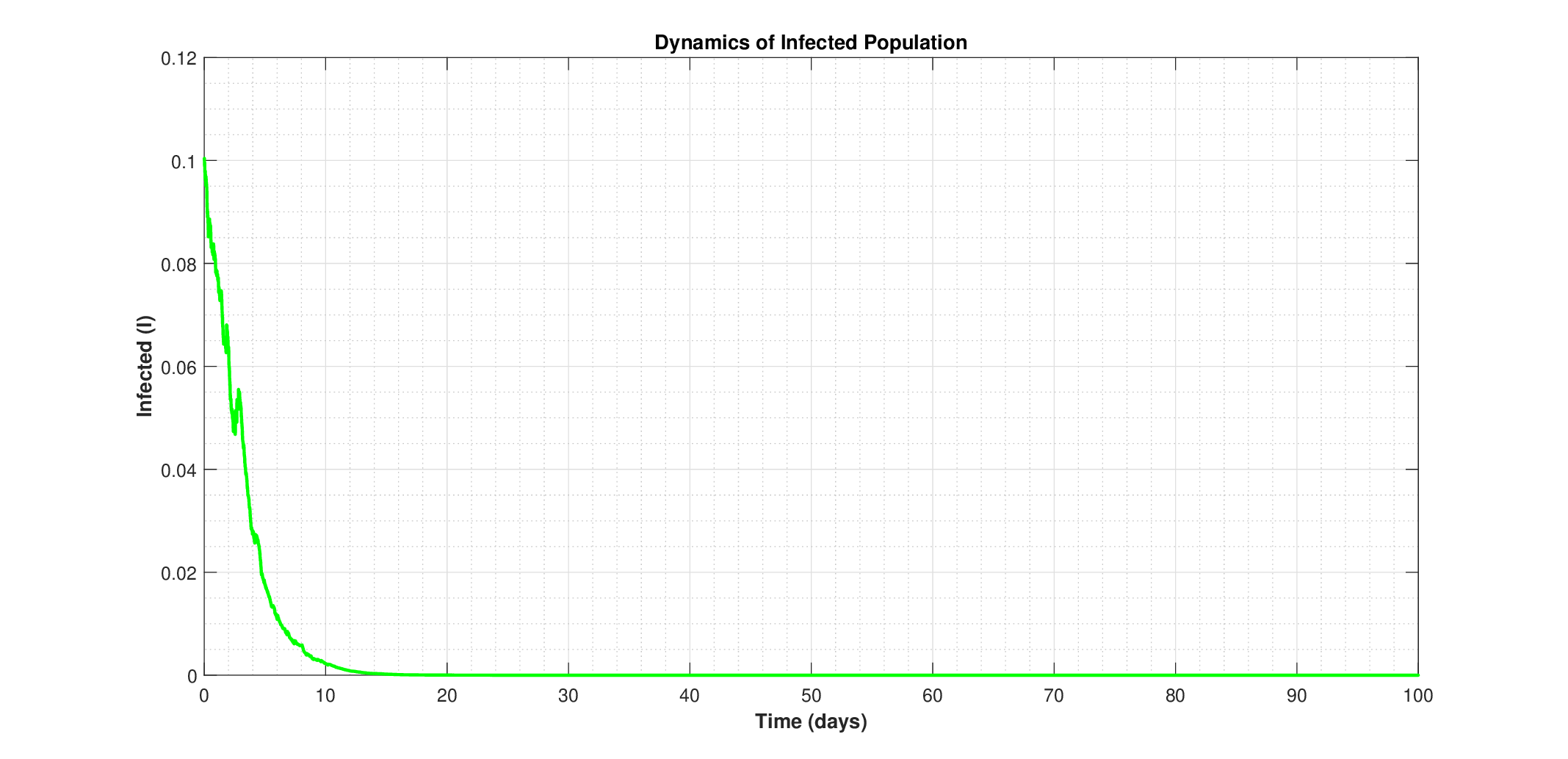}
        \caption{Dynamics of the Symptomatic (I) \\ population over time.}
        \label{fig:symptomatic}
    \end{minipage}\hfill
    \begin{minipage}{0.45\linewidth}
        \centering
        \includegraphics[width=\linewidth]{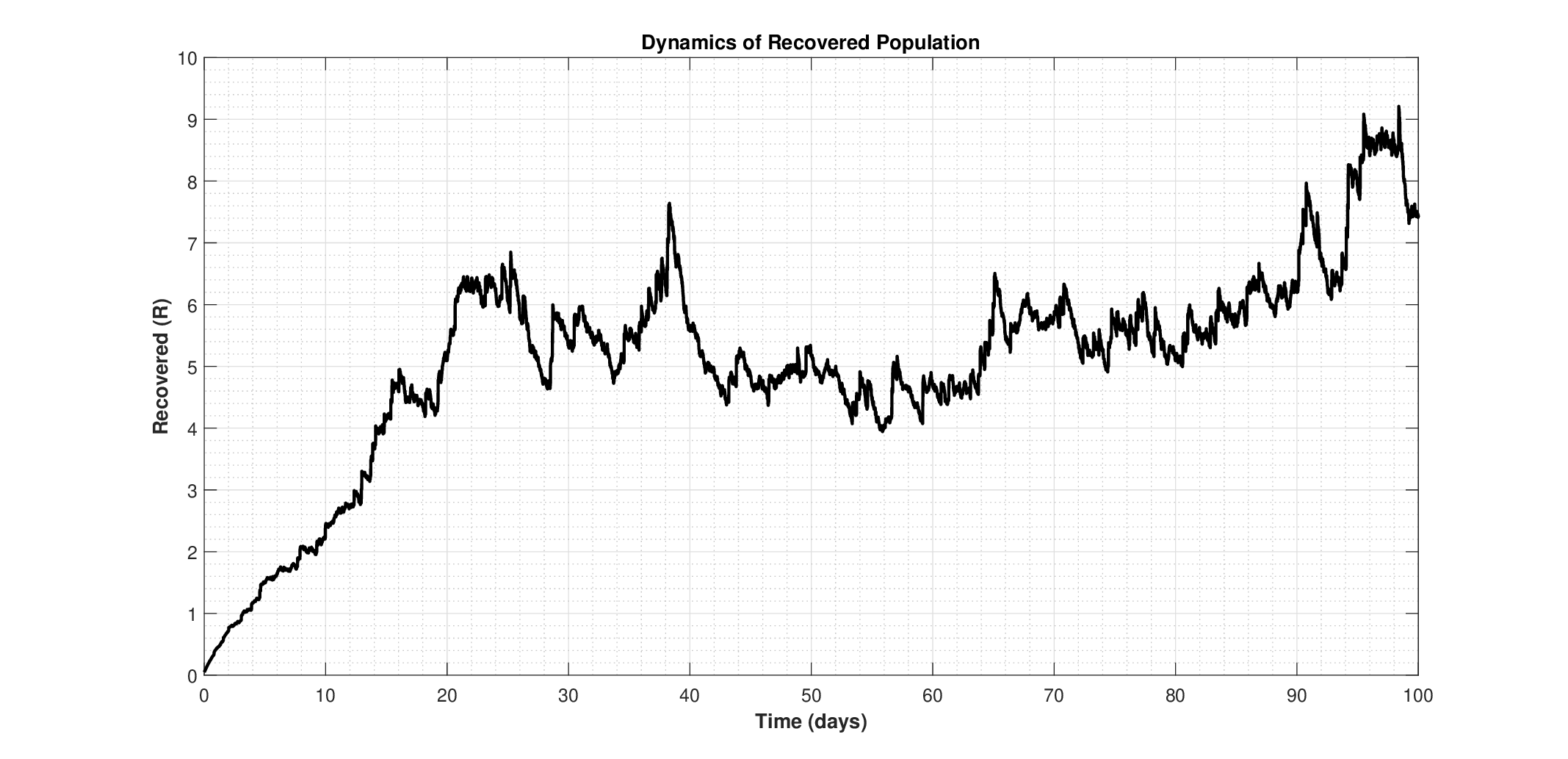}
        \caption{Dynamics of the Recovered (R) \\ population over time.}
        \label{fig:recovered}
    \end{minipage}
\end{figure}

\begin{figure}[H]
    \centering
    \includegraphics[width=\linewidth]{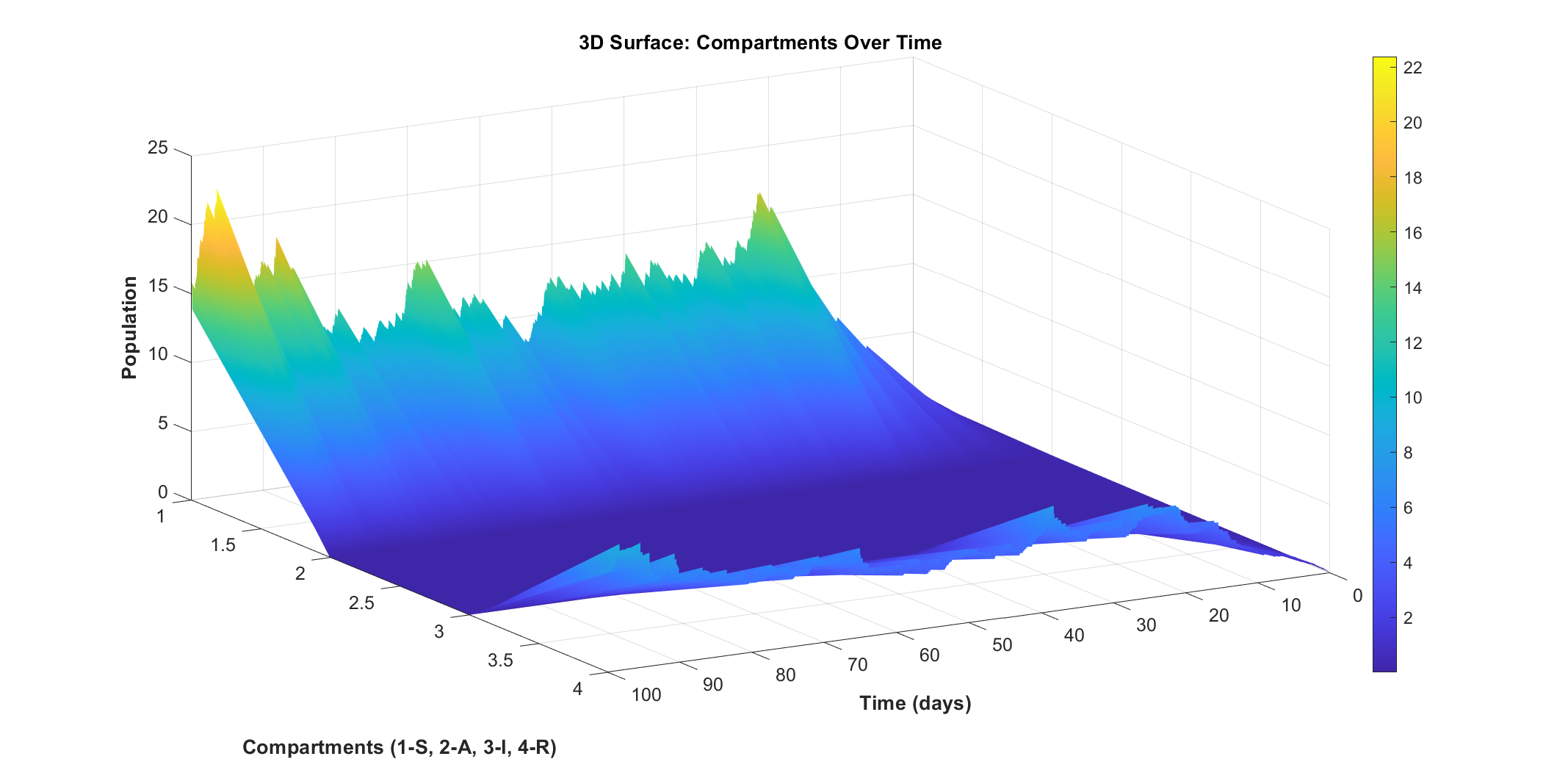}
    \caption{3D surface plot showing the evolution of all compartments (S, A, I, R) over time. This plot provides a comprehensive view of the transitions and interactions between compartments throughout the simulation period.}
    \label{fig:3d_surface}
\end{figure}

\begin{figure}[H]
    \centering
    \includegraphics[width=\linewidth]{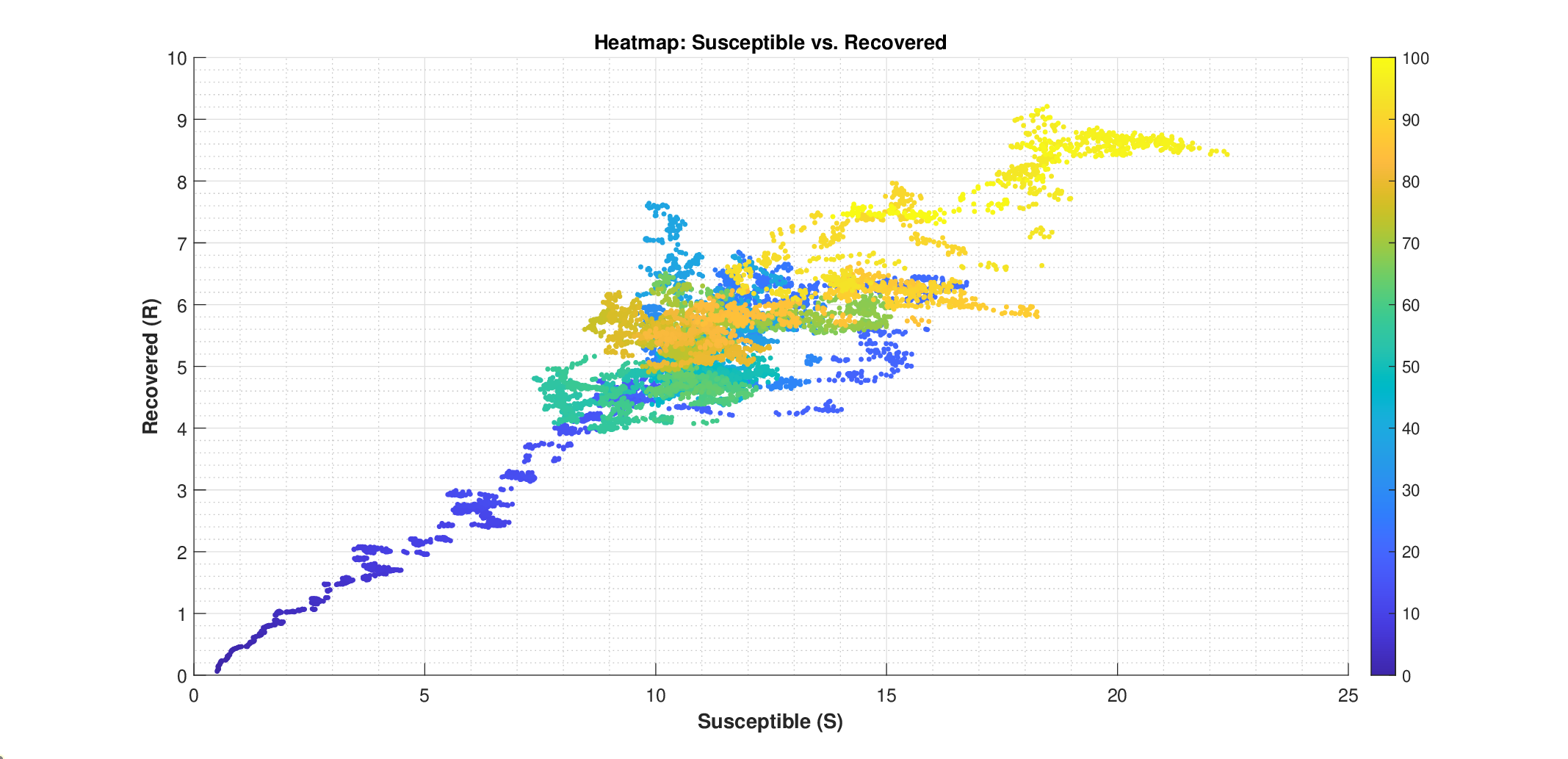}
    \caption{Heatmap of Susceptible (S) vs. Recovered (R) populations. The scatter plot, colored by time, illustrates the relationship and transitions between susceptible and recovered states, showing the progression of individuals from susceptibility to recovery.}
    \label{fig:heatmap}
\end{figure}

\begin{figure}[H]
    \centering
    \includegraphics[width=\linewidth]{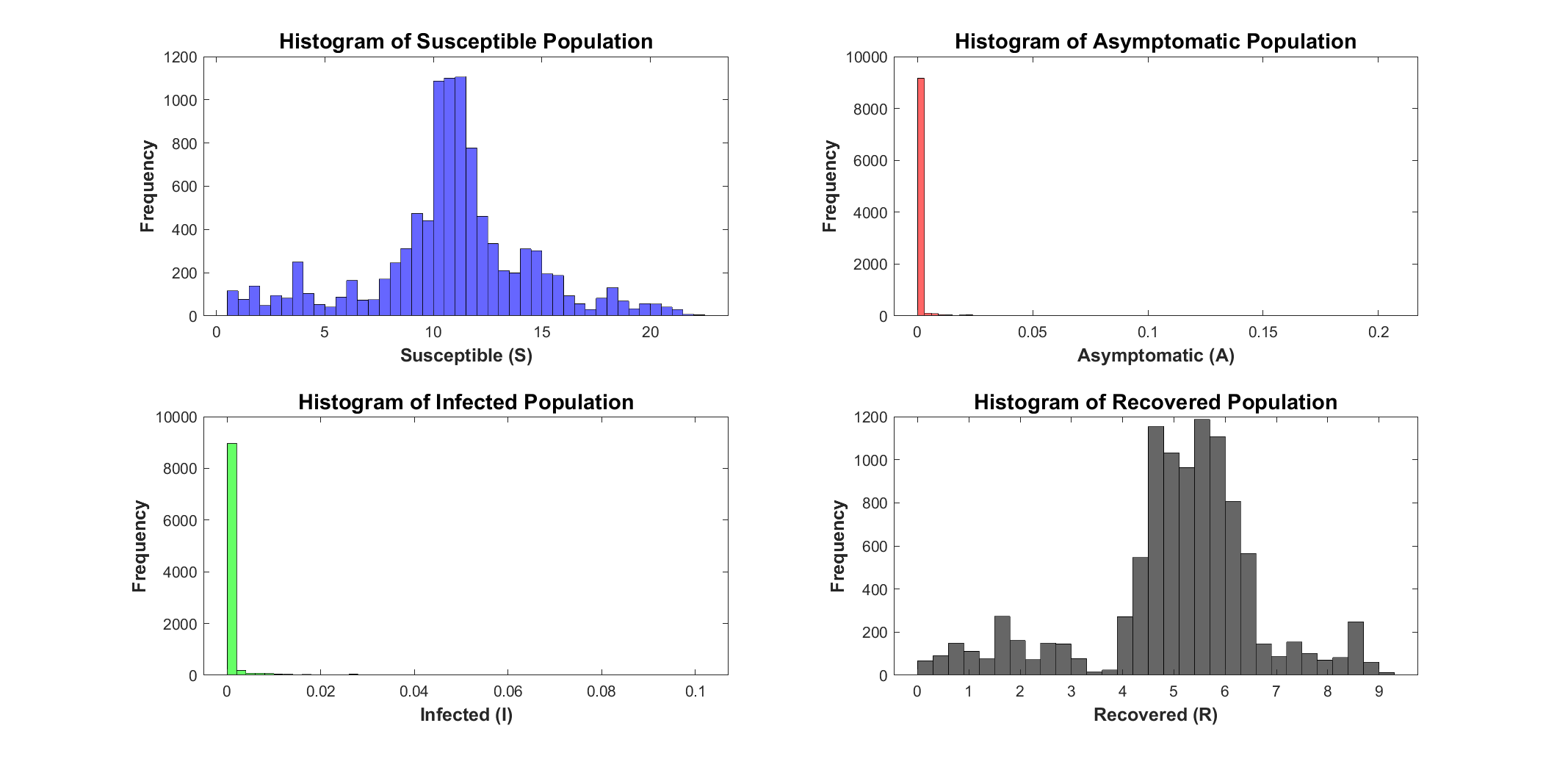}
    \caption{Histograms of the distribution of each compartment (S, A, I, R). These histograms show the frequency distribution of the populations in each compartment, highlighting the variability and stochastic behavior in the epidemic's progression.}
    \label{fig:histograms}
\end{figure}

\begin{example}
 Updating some parameter values by letting  $\nu=0.2$, $\rho=0.17$ and $(\sigma_{1},\sigma_{2}, \sigma_{3}, \sigma_{4}) = (0.15,0.1,0.1,0.2)$, the threshold quantity takes the value  $\mathcal{T}_s^*= 1.9343 >1$ satisfying the conditions of Theorem \ref{opo}. 
Then, numerical simulations in such scenarios have to show positive values of infected numbers throughout the time and this fact is clearly shown by Figs. 8-11, where the trajectories of infectives are bounded beyond zero.
\end{example}

\begin{figure}[H]
    \centering
    \begin{minipage}{0.45\linewidth}
        \centering
        \includegraphics[width=\linewidth]{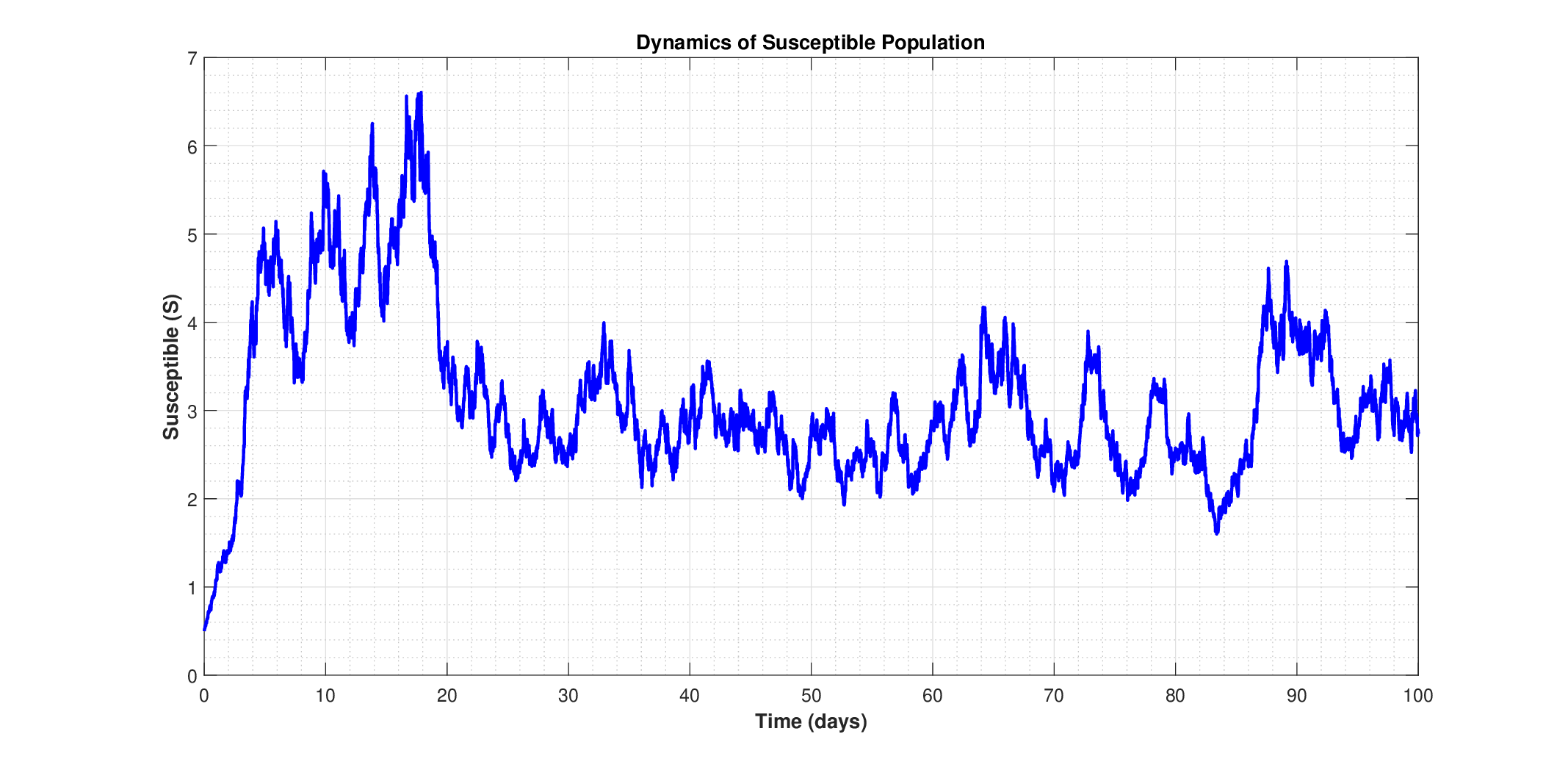}
        \caption{Dynamics of the Susceptible (S) \\ population over time.}
        \label{fig:susceptible2}
    \end{minipage}\hfill
    \begin{minipage}{0.45\linewidth}
        \centering
        \includegraphics[width=\linewidth]{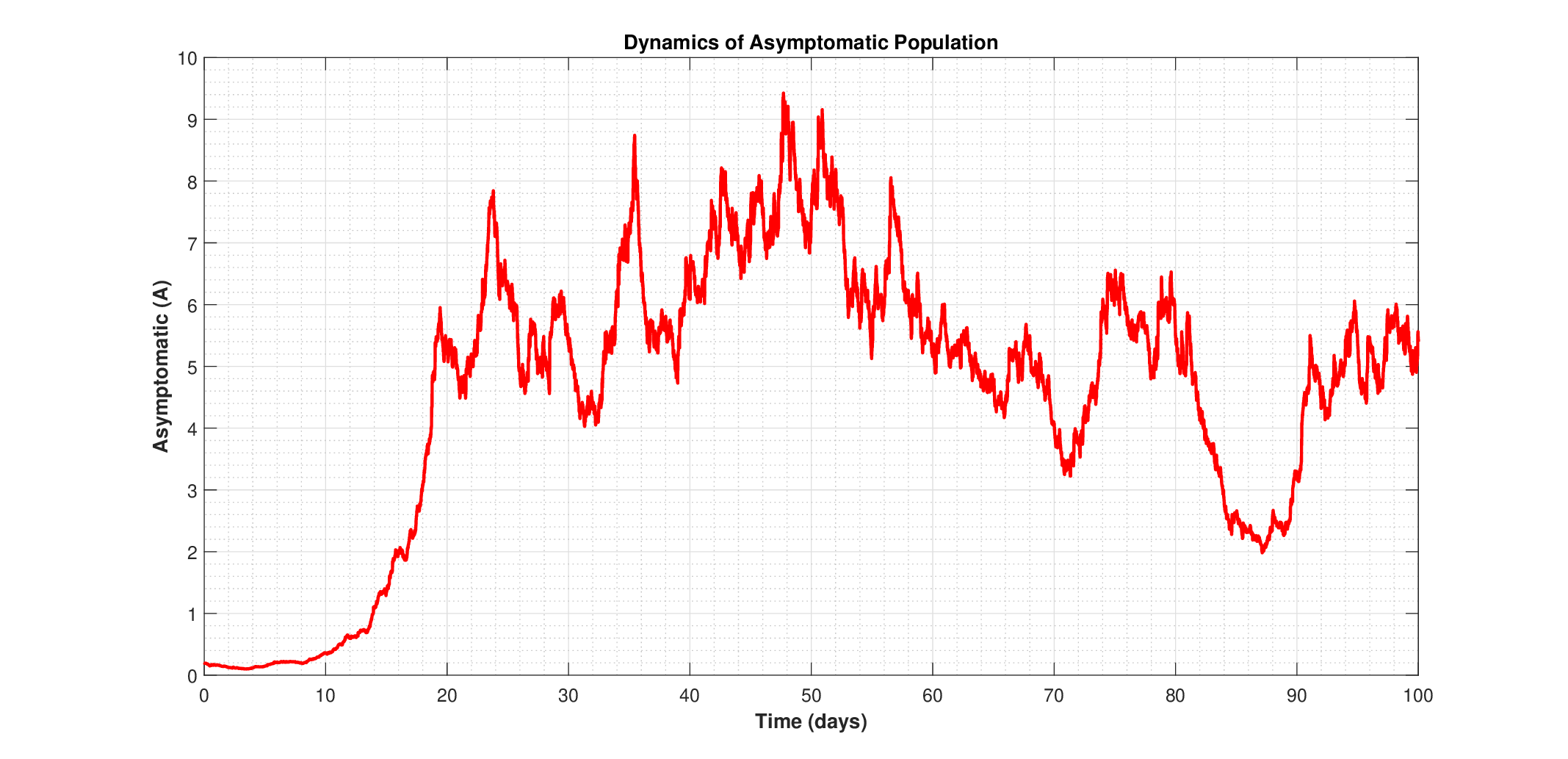}
        \caption{Dynamics of the Asymptomatic \\ (A) population over time.}
        \label{fig:asymptomatic2}
    \end{minipage}

    \vspace{0.5cm} 

    \begin{minipage}{0.45\linewidth}
        \centering
        \includegraphics[width=\linewidth]{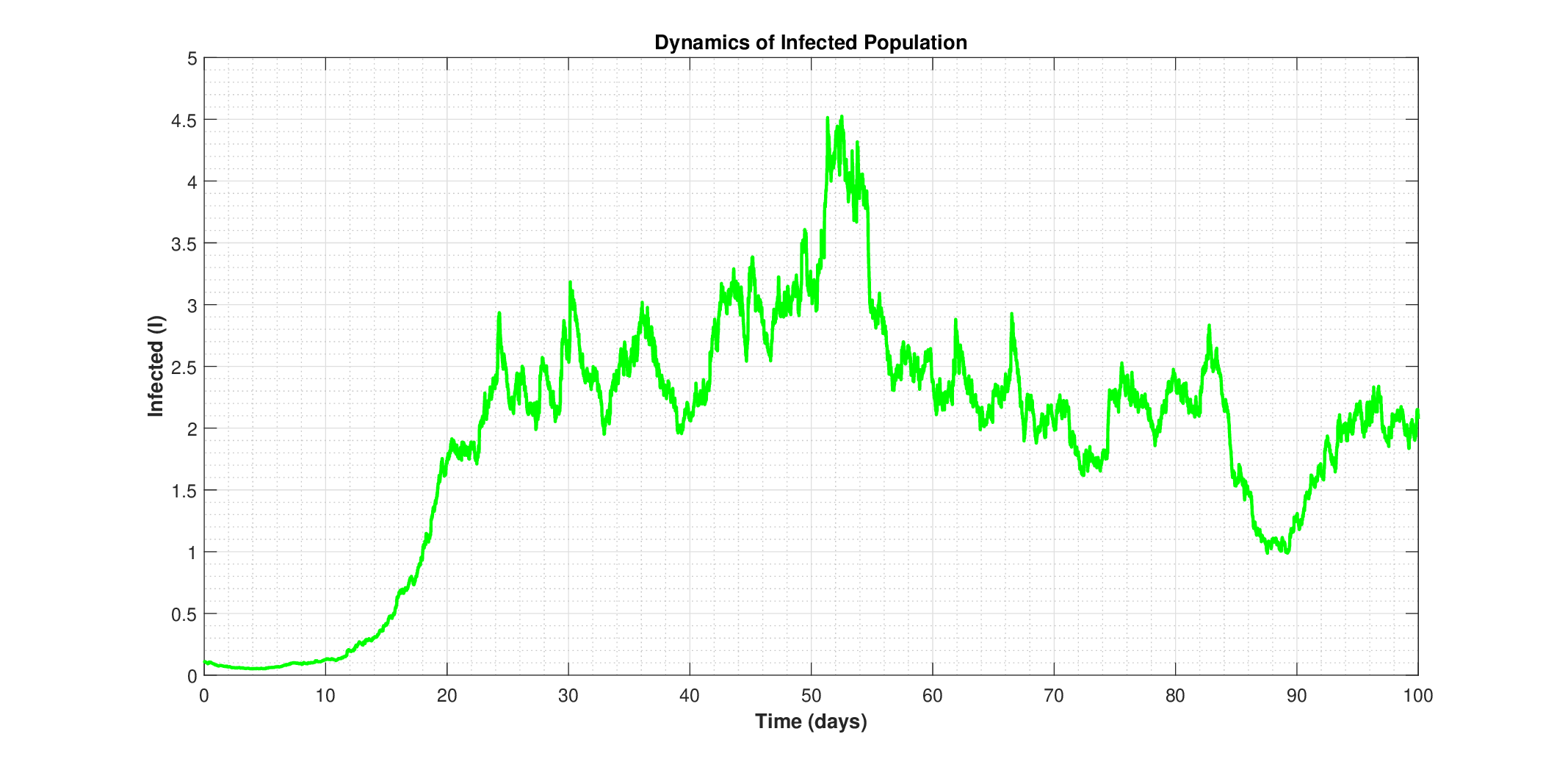}
        \caption{Dynamics of the Symptomatic (I) \\ population over time.}
        \label{fig:symptomatic2}
    \end{minipage}\hfill
    \begin{minipage}{0.45\linewidth}
        \centering
        \includegraphics[width=\linewidth]{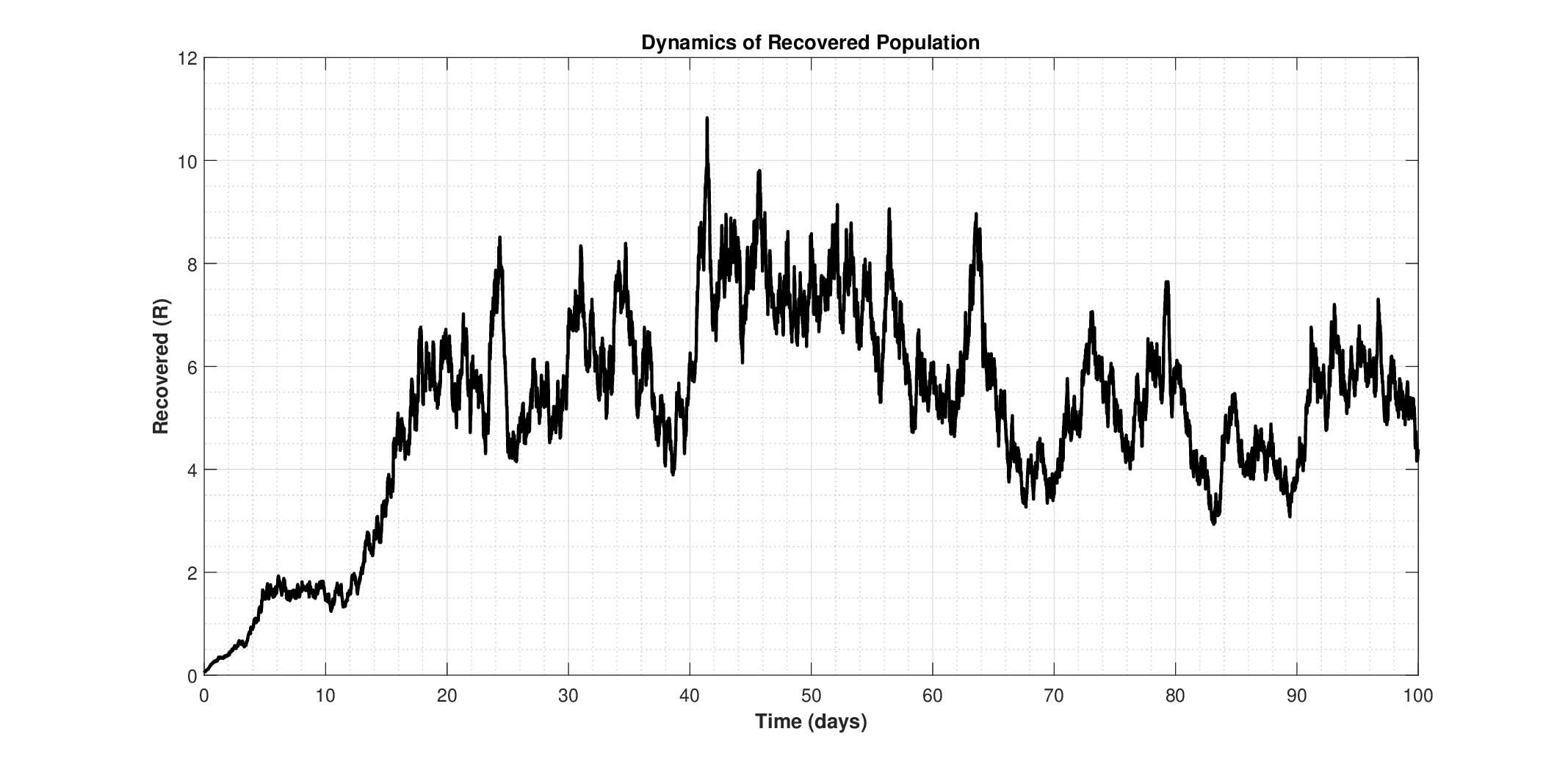}
        \caption{Dynamics of the Recovered (R) \\ population over time.}
        \label{fig:recovered2}
    \end{minipage}
\end{figure}

\section{Conclusion}
In this paper, we built upon the model presented in \cite{global} by integrating two types of noise: white and Lévy noise. We first established the existence of a global positive solution for the extended model. Subsequently, we examined specific conditions under which the disease would become extinct. We then investigated the circumstances that would lead to the persistence of the disease, identifying a sufficient condition for this outcome.

{\small

\end{document}